\newcommand{\var}{{\bf Var}}
\newcommand{\E}{{\bf E}}
\newcommand{\abs}[1]{| #1 |}
\renewcommand{\Pr}{{\bf Pr}}
\newcommand{\match}{{\small{\sc DMR}}}
\newcommand{\AND}{{\small{\sc AND}}}
\newcommand{\DISJ}{{\small{\sc DISJ}}}
\renewcommand{\P}{\mathcal{P}}
\providecommand{\abs}[1]{\left|#1\right|}
\theoremstyle{plain}
\newtheorem{theorem}{Theorem}[section]
\newtheorem{lemma}[theorem]{Lemma}
\newtheorem{definition}[theorem]{Definition}
\theoremstyle{definition}
\begin{document}

\title{Communication complexity of approximate maximum matching in the message-passing model
}


\author{Zengfeng Huang\thanks{University of New South Wales, Sydney, Australia. Email: {zengfeng.huang@unsw.edu.au}}     
	\and
        Bozidar Radunovic\thanks{Microsoft Research, Cambridge, United Kingdom.
        	             Email: {bozidar@microsoft.com}} 		 
    \and
				Milan Vojnovic\thanks{Department of Statistics,
					             London School of Economics (LSE), London, United Kingdom.
					              Email: {m.vojnovic@lse.ac.uk}}			   
	\and
				Qin Zhang\thanks{Computer Science Department, Indiana University, Bloomington, USA.
					              Email: {qzhangcs@indiana.com}}
}



\date{}

\maketitle

\begin{abstract}
	We consider the communication complexity of finding an approximate maximum matching in a graph in a multi-party message-passing communication model. The maximum matching problem is one of the most fundamental graph combinatorial problems, with a variety of applications.
	
	The input to the problem is a graph $G$ that has $n$ vertices and the set of edges partitioned over $k$ sites, and an approximation ratio parameter $\alpha$. The output is required to be a matching in $G$ that has to be reported by one of the sites, whose size is at least factor $\alpha$ of the size of a maximum matching in $G$. 
	
	We show that the communication complexity of this problem is $\Omega(\alpha^2 k n)$ information bits. This bound is shown to be tight up to a $\log n$ factor, by constructing an algorithm, establishing its correctness, and an upper bound on the communication cost. The lower bound also applies to other graph combinatorial problems in the message-passing communication model, including max-flow and graph sparsification. 
\end{abstract}

\section{Introduction}\label{sec:intro}
Complex and massive volume data processing requires to scale out to parallel and distributed computation platforms. Scalable distributed computation algorithms are needed that make efficient use of scarce system resources such as communication bandwidth between compute nodes in order to avoid the communication network becoming a bottleneck. A particular interest has been devoted to studying scalable computation methods for graph data, which arises in a variety of applications including online services, online social networks, biological, and economic systems.

In this paper, we consider the distributed computation problem of finding an approximate maximum matching in an input graph whose edges are partitioned over different compute nodes (we refer to as sites). Several performance measures are of interest including the communication complexity in terms of the number of bits or messages, the time complexity in terms of the number of rounds, and the storage complexity in terms of the number of bits. In this paper we focus on the communication complexity. Our main result is a tight lower bound on the communication complexity for approximate maximum matching. 

We assume a multi-party message-passing communication model~\cite{BEOPV13,PVZ12}, we refer to as \emph{message-passing model}, which is defined as follows. The message-passing model consists of $k\ge 2$ sites $p^1$, $p^2$, $\ldots$, $p^k$. The input is partitioned across $k$ sites, with sites $p^1$, $p^2$, $\ldots$, $p^k$ holding pieces of input data $x^1$, $x^2$, $\ldots$, $x^k$, respectively. The goal is to design a communication \emph{protocol} for the sites to jointly compute the value of a given function $f:{\mathcal X}^k \rightarrow {\mathcal Y}$ at point $(x^1,x^2, \ldots, x^k)$. The sites are allowed to have point-to-point communications between each other. At the end of the computation, at least one site should return the answer. The goal is to find a protocol that minimizes the total communication cost between the sites. 

For technical convenience, we introduce another special party called the {\em coordinator}. The coordinator does not have any input. We require that all sites can only talk with the coordinator, and at the end of the computation, the coordinator should output the answer. We call this model {\em the coordinator model}. See Figure~\ref{fig:cor} for an illustration. Note that we have essentially  replaced the clique communication topology with a star topology, which increases the total communication cost only by a factor of $2$ and thus, it does not affect the order of the asymptotic communication complexity.

The \emph{edge partition} of an input graph $G = (V,E)$ over $k$ sites is defined by a partition of the set of edges $E$ in $k$ disjoint sets $E^1$, $E^2$, $\ldots$, $E^k$, and assigning each set of edges $E^i$ to site $p^i$. For bipartite graphs with a set of left vertices and a set of right vertices, we define an alternative way of an edge partition, referred to as the \emph{left vertex partition}, as follows: the set of left vertices are partitioned in $k$ disjoints parts, and all the edges incident to one part is assigned to a unique site. Note that left vertex partition is more restrictive, in the sense that any left vertex partition is an instance of an edge partition. Thus, lower bounds holds in this model are stronger as designing algorithms might be easier in this restrictive setting. Our lower bound is proved for left vertex partition model, while our upper bound holds for an arbitrary edge partition of any graph.

\begin{figure}[t]
\begin{center}
\includegraphics[width=0.45\textwidth]{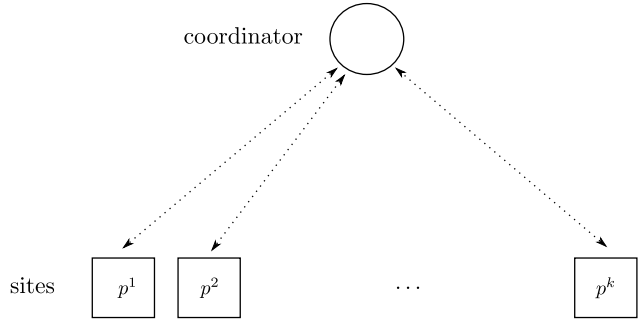}
\end{center}
\caption{Coordinator model.}
\label{fig:cor}
\end{figure}

\subsection{Summary of results} \label{sec:results}

We study the approximate maximum matching problem in the message-passing model which we refer to as Distributed Matching Reporting (\match) that is defined as follows: given as input is a graph $G = (V,E)$ with $\abs{V} = n$ vertices and a parameter $0 < \alpha \le 1$; the set of edges $E$ is arbitrarily partitioned into $k\ge 2$ subsets $ E^1, E^2, \cdots , E^k$ such that $E^i$ is assigned to site $p^i$; the coordinator is required to report an $\alpha$-approximation of the maximum matching in graph $G$. 

In this paper, we show the following main theorem.

\begin{theorem}
\label{thm:main}
For every $0<\alpha \le 1$ and the number of sites $1< k\le n$, any $\alpha$-approximation randomized algorithm for \match\ in the message-passing model with the error probability of at most $1/4$ has the communication complexity of $\Omega(\alpha^2kn)$ bits. 
Moreover, this communication complexity holds for an instance of a bipartite graph.
\end{theorem}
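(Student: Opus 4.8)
The plan is to prove the lower bound by a reduction from a multi-party communication problem whose hardness is well understood, namely a distributed version of set-disjointness (or an $\AND$-type problem), combined with an information-theoretic direct-sum argument over many independent coordinates. I would first isolate a "hard" gadget on a constant number of vertices: a small bipartite graph where a single bit of input at a site determines whether a particular edge is present, and where reporting an $\alpha$-approximate matching essentially forces the coordinator to learn, for a constant fraction of the gadgets, whether the relevant edges exist. The natural building block is the two-party $\DISJ$ (or $\AND$) problem, whose randomized/information complexity is $\Omega(1)$ per instance; tiling $\Theta(\alpha^2 n)$ independent gadgets across the $n$ vertices and spreading each gadget's edges across the $k$ sites (using the left-vertex-partition freedom to put one left vertex's edges per site) should yield the $\Omega(\alpha^2 k n)$ bound.

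Concretely, the steps I would carry out are: (1) Define a base distribution on gadget inputs under which a maximum matching has a predictable size, but where an $\alpha$-approximate matching reported by the coordinator must, with constant probability, correctly "cover" a $\Theta(\alpha)$ fraction of the gadgets — so that extracting the matching reveals joint information about the $k$ sites' bits in those gadgets. (2) Set up an embedding so that each of $m = \Theta(\alpha^2 n)$ gadgets involves all $k$ sites (or a $\Theta(\alpha)$-sized random subset, to get the extra $\alpha$ factor correctly), and argue via a direct-sum / information-complexity argument (à la Bar-Yossef–Jayram–Kumar–Sivakumar, and the message-passing adaptations in \cite{PVZ12,BEOPV13}) that the total information cost is at least $m$ times the single-gadget information cost. (3) Lower-bound the single-gadget cost by $\Omega(\alpha k)$ by reducing $k$-party $\AND$ / $\DISJ$ with promise to it: distinguishing "this gadget's edges are all absent" from "exactly one present" changes the reported matching, and the symmetric/informational cost of this across $k$ players is $\Omega(\alpha k)$ bits. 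Multiplying gives $\Omega(\alpha^2 k n)$. I would also verify the graph is bipartite, has $n$ vertices, and the construction respects the left-vertex-partition model; and handle the $\alpha \le 1$ regime including small $\alpha$ by padding with isolated/irrelevant vertices so the approximation factor genuinely bites on the gadget portion.

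The main obstacle, I expect, is getting the $\alpha^2$ dependence exactly right rather than just $\alpha$: one factor of $\alpha$ should come from how many gadgets the approximate matching is forced to resolve (a worse approximation lets the coordinator "ignore" more gadgets), and the other factor of $\alpha$ should come from how the hard bit within each gadget is diluted across the $k$ sites (only a $\Theta(\alpha)$ fraction of sites carry relevant information, or the relevant edge is planted at a random site so each site individually has only $\Theta(\alpha)$ chance of mattering). Balancing these two sources of $\alpha$ while keeping the gadgets independent — so the direct-sum theorem applies cleanly — and while keeping the total vertex count at $n$, is the delicate part. A secondary technical point is ensuring the reduction is in terms of information complexity (not just communication), since that is what composes additively in the message-passing/coordinator model; I would rely on the information-complexity machinery for the coordinator model already developed in \cite{PVZ12,BEOPV13} and a symmetrization argument to conclude the per-site bound.
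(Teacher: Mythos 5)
Your high-level plan (gadgets + information-complexity direct sum + set-disjointness, with the per-site information cost machinery for the coordinator model) points in the right direction, but the two ideas that actually make the paper's proof work are missing, and one of your steps is asserted rather than provable as stated. First, your construction has no analogue of the paper's \emph{noisy edges}. In the paper each of the $n/2$ left vertices carries a $k$-bit two-party $\DISJ$ instance whose ``1'' coordinates create $\Theta(\alpha n^2)$ edges into a small set $V_0$ of size $\Theta(\alpha n)$; these edges are individually useless (they can only form a matching of size $\approx \alpha n/10$) but they hide the $\Theta(n)$ important edges. Without such decoys, a gadget in which ``a single bit determines whether a particular edge is present'' yields a graph that is essentially a union of isolated edges, and the coordinator can get an $\alpha$-approximation by having each site forward a random $\Theta(\alpha)$-fraction of its edges, costing only $O(\alpha n\log n)$ bits --- so no $\Omega(\alpha^2 kn)$ bound can come out of that distribution. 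Moreover, one of the two factors of $\alpha$ enters precisely through this device: the bias $p=\Theta(\alpha)$ needed to keep $|V_0|$ small reappears as the $p$ in the strengthened single-coordinate bound $I(A,B;\Pi\mid W,R)=\Omega(p\beta)$ for $\AND$ under a $p$-biased distribution, which the paper has to prove from scratch (the standard BJKS bound only covers $p=1/2$).

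Second, your step (3) --- a per-gadget cost of $\Omega(\alpha k)$ via a reduction to $k$-party $\AND$/$\DISJ$ ``with promise'' --- is exactly the step the authors say is non-trivial and deliberately avoid; they instead reduce $k$-party $\match$ to \emph{two-party} $\DISJ$ by a symmetrization in which Alice simulates one uniformly random site and Bob simulates everything else, and the second factor of $\alpha$ comes from the fact that the reported matching contains only $\Theta(\alpha n)$ of the $\Theta(n)$ statistically-identical important edges, so the embedded instance is resolved only with probability $\beta=\Theta(\alpha)$ (a one-sided-error argument), not from ``diluting the hard bit across a $\Theta(\alpha)$ fraction of sites.'' Finally, your accounting does not close: $m=\Theta(\alpha^2 n)$ gadgets at $\Omega(\alpha k)$ each gives $\Omega(\alpha^3 kn)$, while your closing paragraph switches to $\Theta(\alpha n)$ gadgets; the paper uses $n/2$ gadgets each contributing $\Omega(p\beta k)=\Omega(\alpha^2 k)$. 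You would need to commit to one consistent decomposition and supply both the noisy-edge construction and the biased-distribution $\AND$/$\DISJ$ information bound before the argument goes through.
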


In this paper we are more interested in the case when $k\gg \log n$, since otherwise the trivial lower bound of $\Omega(n\log n)$ bits (the number of bits to describe a maximum matching) is already near-optimal.

For \match, a seemingly weaker requirement is that, at the end of the computation, each site $p^{i}$ outputs a set of edges $M^{i}\subseteq E^{i}$ such that $M^{1}\cup M^2 \cup\cdots\cup M^{k}$ is a matching of size that is at least factor $\alpha$ of a maximum matching. However, given such an algorithm, each site might just send $M^{i}$ to the coordinator after running the algorithm, which will increase the total communication cost by at most an additive term of $n$. Therefore, our lower bound also holds for this setting. 

A simple greedy distributed algorithm solves \match\ for $\alpha = 1/2$ with the communication cost of $O(kn \log n)$ bits. This algorithm is based on computing a maximal matching in graph $G$. A maximal matching is a matching whose size cannot be enlarged by adding one or more edges. A maximal matching is computed using a greedy sequential procedure defined as follows. Let $G(E')$ be the graph induced by a subset of edges $E'\subseteq E$. Site $p^1$ computes a maximal matching $M^1$ in $G(E^1)$, and sends it to $p^2$ via the coordinator. Site $p^2$ then computes a maximal matching $M^2$ in $G(E^1\cap E^2)$ by greedily adding edges in $E^2$ to $M^1$, and then sends $M^2$ to site $p^3$. This procedure is continued and it is completed once site $p^k$ computed $M^k$ and sent it to the coordinator. Notice that $M^k$ is a maximal matching in graph $G$, hence it is a $1/2$-approximation of a maximum matching in $G$. The communication cost of this protocol is $O(kn\log n)$ bits because the size of each $M^i$ is at most $n$ edges and each edge's identifier can be encoded with $O(\log n)$ bits. This shows that our lower bound is tight up to a $\log n$ factor. This protocol is essentially sequential and takes $O(k)$ rounds in total. We show that Luby's classic parallel algorithm for maximal matching~\cite{luby1986simple} can be easily adapted to our model with $O(\log n)$ rounds of computation and $O(k n\log^2 n)$ bits of communication. 

In Section~\ref{sec:up}, we show that our lower bound is also tight with respect to the approximation ratio parameter $\alpha$ for any $0 < \alpha \le 1/2$ up to a $\log n$ factor. It was shown  in \cite{WZ13} that many statistical estimation problems and graph combinatorial problems require $\Omega(kn)$ bits of communication to obtain an {\em exact} solution. Our lower bound shows that for \match\ even computing a constant approximation requires this amount of communication.

The lower bound established in this paper applies also more generally for a broader range of graph combinatorial problems. Since a bipartite maximum matching problem can be found by solving a {\em max-flow} problem, our lower bound also holds for \emph{approximate max-flow}. Our lower bound also implies a lower bound for \emph{graph sparsification} problem; see \cite{AGM12} for definition. This is because in our lower bound construction (see Section~\ref{sec:lb}), the bipartite graph under consideration contains many cuts of size $\Theta(1)$ which have to be included in any sparsifier. By our construction, these edges form a good approximate maximum matching, and thus any good sparsifier recovers a good matching. In~\cite{AGM12}, it was shown that there is a sketch-based $O(1)$-approximate graph sparsification algorithm with the sketch size of $\tilde{O}(n)$ bits, which directly translates to an approximation algorithm of $\tilde{O}(kn)$ communication in our model. Thus, our lower bound is tight up to a poly-logarithmic factor for the graph sparsification problem.

We briefly discuss the main ideas and techniques of our proof of the lower bound for \match. As a hard instance, we use a bipartite graph $G = (U, V, E)$ with $\abs{U} = \abs{V} = n/2$. Each site $p^i$ holds a set of $r = n/(2k)$ vertices which is a partition of the set of left vertices $U$. The neighbors of each vertex in $U$ is determined by a two-party set-disjointness instance (\DISJ, defined formally in Section~\ref{sec:DISJ}). There are in total $rk = n/2$ \DISJ\ instances, and we want to perform a direct-sum type of argument on these $n/2$ \DISJ\ instances. We show that due to symmetry, the answer of \DISJ\ can be recovered from a reported matching, and then use information complexity to establish the direct-sum theorem. For this purpose, we use a new definition of the information cost of a protocol in the message-passing model. 

We believe that our techniques would prove useful to establish the communication complexity for other graph combinatorial problems in the message-passing model. The reason is that for many graph problems whose solution certificates ``span" the whole graph (e.g., connected components, vertex cover, dominating set, etc), it is natural that a hard instance would be like for the maximum matching problem, i.e., each of the $k$ sites would hold roughly $n/k$ vertices and the neighbourhood of each vertex would define an independent instance of a two-party communication problem. 

\subsection{Related work}


The problem of finding an approximate maximum matching in a graph has been studied for various computation models, including the streaming computation model~\cite{AMS99}, MapReduce computation model~\cite{KSV10,GSZ11}, and a traditional distributed computation model known as $\mathsf{LOCAL}$ computation model.  

In \cite{open}, the maximum matching was presented as one of open problems in the streaming computation model. Many results have been established since then by various authors \cite{AG11}, \cite{AG11b}, \cite{AGM12b}, \cite{AKLY16}, \cite{ELMS11}, \cite{KonradMM12}, \cite{Konrad15}, \cite{K13}, \cite{KKS14}, \cite{M05}, and \cite{Z12}. Many of the studies were concerned with a streaming computation model that allows for $\tilde O(n)$ space; referred to as the semi-streaming computation model. The algorithms developed for the semi-streaming computation model can be directly applied to obtain a constant-factor approximation of maximum matching in a graph in the message-passing model that has a communication cost of $\tilde{O}(kn)$ bits.  

For approximate maximum matching problem in the MapReduce model, \cite{LMSV11} gave a $1/2$-approximation algorithm, which requires a constant number of rounds and uses $\tilde{O}(m)$ bits of communication, for any input graph with $m$ edges.  

The approximate maximum matching has been studied in the $\mathsf{LOCAL}$ computation model by various authors \cite{israeli1986fast,LPP08,LPR07,WW04}. In this computation model, each processor corresponds to a unique vertex of the graph and edges represent bidirectional communications between processors. The time advances over synchronous rounds. In each round, every processor sends a message to each of its neighbours, and then each processor performs a local computation using as input its local state and the received messages. Notice that in this model, the input graph and the communication topology are the same, while in the message-passing model the communication topology is essentially a complete graph which is different from the input graph and, in general, sites do not correspond to vertices of the topology graph.  
 
A variety of graph and statistical computation problems have been recently studied in the message-passing model~\cite{KNPR13}, \cite{PVZ12}, \cite{WZ12}, \cite{WZ13}, \cite{WZ14}. A wide range of graph and statistical problems has been shown to be hard in the sense of requiring $\Omega(kn)$ bits of communication, including \emph{graph connectivity}~\cite{PVZ12,WZ13}, \emph{exact counting of distinct elements}~\cite{WZ13}, and \emph{$k$-party set-disjointness}~\cite{BEOPV13}. Some of these problems have been shown to be hard even for random order inputs~\cite{KNPR13}. 

In \cite{BEOPV13}, it has been shown that the communication complexity of the \emph{$k$-party set-disjointness} problem in the message-passing model is $\Omega(kn)$ bits. This work was independent and concurrent to ours. Incidentally, it uses a similar but different input distribution to ours. Similar input distributions were also used in previous work such as \cite{PVZ12} and \cite{WZ12}. This is not surprising because of the nature of the message-passing model. There may exist a reduction between the $k$-party set-disjointness and \match\, but showing this is non-trivial and would require a formal proof. The proof of our lower bound is different in that we use a reduction of the $k$-party \match\ to a $2$-party set-disjointness using a symmetrisation argument, while \cite{BEOPV13} uses a coordinative-wise direct-sum theorem to reduce the $k$-party set-disjointness to a $k$-party $1$-bit problem. 

The approximate maximum matching has been recently studied in the coordinator model under additional condition that the sites send messages to the coordinator simultaneously and once, referred to as the simultaneous-communication model. The coordinator then needs to report the output that is computed using as input the received messages. It has been shown in \cite{AKLY16} that for the vertex partition model, our lower bound is achievable by a simultaneous protocol for any $\alpha \le 1/\sqrt{k}$ up to a poly-logarithmic factor.

The communication/round complexity of approximate maximum matching has been studied in the context of finding efficient economic allocations of items to agents, in markets that consist of unit-demand agents in a distributed information model where agents' valuations are unknown to a central planner, which requires communication to determine an efficient allocation. This amounts to studying the communication or round complexity of approximate maximum matching in a bipartite graph that defines preferences of agents over items. In a market with $n$ agents and $n$ items, this amounts to approximate maximum matching in the $n$-party model with a left vertex partition. \cite{DNO14} and \cite{ANRW15} studied this problem in the so called blackboard communication model, where messages sent by agents can be seen by all agents.
For one-round protocols, \cite{DNO14} established a tight trade-off between message size and approximation ratio. As indicated by the authors in~\cite{DNO14}, their randomized lower bound is actually a special case of ours. In a follow-up work, \cite{ANRW15} obtained the first non-trivial lower bound on the number of rounds for general randomized protocols.



\subsection{Roadmap}

In Section~\ref{sec:preliminary} we present some basic concepts of probability and information theory, communication and information complexity that are used throughout the paper. Section~\ref{sec:lb} presents the lower bound and its proof, which is the main result of this paper. Section~\ref{sec:up} establishes the tightness of the lower bound up to a poly-logarithmic factor. Finally, in Section~\ref{sec:conc}, we conclude.

\section{Preliminaries}\label{sec:preliminary}

\subsection{Basic facts and notation}
 
Let $[q]$ denote the set $\{1, 2, \ldots, q\}$, for given integer $q \ge 1$. All logarithms are assumed to have base $2$. We use capital letters $X, Y, \ldots$ to denote random variables and the lower case letters $x, y, \ldots$ to denote specific values of respective random variables $X, Y, \ldots$. 

We write $X\sim \mu$ to mean that $X$ is a random variable with distribution $\mu$, and $x \sim \mu$ to mean that $x$ is a sample from distribution $\mu$. For a distribution $\mu$ on a domain ${\mathcal X}\times {\mathcal Y}$, and $(X,Y)\sim \mu$, we write $\mu(x|y)$ to denote the conditional distribution of $X$ given $Y = y$.  

For any given probability distribution $\mu$ and positive integer $t\ge 1$, we denote with $\mu^t$ the $t$-fold product distribution of $\mu$, i.e. the distribution of $t$ independent and identically distributed random variables according to distribution $\mu$.

We will use the following distances between two probability distributions $\mu$ and $\nu$ on a discrete set ${\mathcal X}$: (a) the \emph{total variation distance} defined as
$$
d(\mu,\nu) = \frac{1}{2}\sum_{x\in {\mathcal X}} |\mu(x) - \nu(x)| = \max_{S\subseteq {\mathcal X}}|\mu(S)-\nu(S)|
$$
and, (b) the \emph{Hellinger distance} defined as
$$
h(\mu,\nu) = \sqrt{\frac{1}{2}\sum_{x\in {\mathcal X}} \left(\sqrt{\mu(x)} - \sqrt{\nu(x)}\right)^2}.
$$
The total variation distance and Hellinger distance satisfy the following relation:

\begin{lemma} For any two probability distributions $\mu$ and $\nu$, the total variation distance and the Hellinger distance between $\mu$ and $\nu$ satisfy
$$
d(\mu,\nu) \le \sqrt{2} h(\mu,\nu).
$$
\label{lem:tothell}
\end{lemma}

With a slight abuse of notation for two random variables $X\sim \mu$ and $Y\sim \nu$, we write $d(X,Y)$ and $h(X,Y)$ in lieu of $d(\mu,\nu)$ and $h(\mu,\nu)$, respectively.

We will use the the following two well-known inequalities. 

\paragraph{Hoeffding's inequality} Let $X$ be the sum of $t\ge 1$ independent and identically distributed random variables that take values in $[0,1]$. Then, for any $s\ge 0$,
$$
\Pr[X -\E[X] \ge s]\le e^{-2s^2/t}.
$$

\paragraph{Chebyshev's inequality} Let $X$ be a random variable with variance $\sigma^2 > 0$. Then, for any $s > 0$,
$$
\Pr[|X-\E[X]|\ge s] \le \frac{\sigma^2}{s^2}. 
$$


\subsection{Information theory}

For two random variables $X$ and $Y$, let $H(X)$ denote the \emph{Shannon entropy} of the random variable $X$, and let $H(X|Y)=\E_y[H(X|Y=y)]$ denote the \emph{conditional entropy} of $X$ given $Y$. Let $I(X;Y)=H(X)-H(X|Y)$ denote the \emph{mutual information} between $X$ and $Y$, and let $I(X;Y|Z) = H(X|Z)- H(X|Y,Z)$ denote the \emph{conditional mutual information} given $Z$. The mutual information between any $X$ and $Y$ is non negative, i.e. $I(X; Y) \ge 0$, or equivalently, $H(X|Y)\le H(X)$. 

We will use the following relations from the information theory:

\paragraph{Chain rule for mutual information} For any jointly distributed random variables $X^1, X^2, \ldots, X^t$, $Y$ and $Z$,
$$
I(X^1,X^2,\ldots,X^t; Y|Z) = \sum_{i=1}^t I(X_i;Y|X_1,\ldots,X_{i-1},Z). 
$$

\paragraph{Data processing inequality} If $X$ and $Z$ are conditionally independent random variables given $Y$, then 
$$
I(X; Y| Z)\le I(X; Y) \hbox{ and } I(X; Z)\le I(X; Y).
$$

\paragraph{Super-additivity of mutual information} If $X^1,X^2, \ldots, X^t$ are independent random variables, then 
$$
I(X^1,X^2, \ldots,  X^t; Y)\ge \sum_{i=1}^t I(X^i; Y).
$$

\paragraph{Sub-additivity of mutual information} If $X^1,X^2, \ldots, X^t$ are conditionally independent random variables given $Y$, then
$$
I(X^1,X^2, \ldots, X^t; Y) \le \sum_{i=1}^t I(X^i; Y).
$$

\subsection{Communication complexity}

In the two party communication complexity model two players, Alice and Bob, are required to jointly compute a function $f:{\mathcal X}\times {\mathcal Y} \rightarrow {\mathcal Z}$. Alice is given $x\in \mathcal{X}$ and Bob is given $y\in \mathcal{Y}$, and they want to jointly compute the value of $f(x,y)$ by exchanging messages according to a randomized protocol $\Pi$. 

We use $\Pi_{xy}$ to denote the \emph{random transcript} (i.e., the concatenation of messages) when Alice and Bob run $\Pi$ on the input $(x,y)$, and $\Pi(x,y)$ to denote the \emph{output} of the protocol. When the input $(x,y)$ is clear from the context, we will simply use $\Pi$ to denote the transcript. We say that $\Pi$ is a \emph{$\gamma$-error protocol} if for every input $(x,y)$, the probability that $\Pi(x,y)\neq f(x,y)$ is not larger than $\gamma$, where the probability is over the randomness used in $\Pi$. We will refer to this type of error as \emph{worst-case error}. An alternative and weaker type of error is the \emph{distributional error}, which is defined analogously for an input distribution, and where the error probability is over both the randomness used in the protocol and the input distribution.

Let $\abs{\Pi_{xy}}$ denote the length of the transcript in information bits. The communication cost of $\Pi$ is 
$$
\max_{x,y} \abs{\Pi_{xy}}.
$$ 
The \emph{$\gamma$-error randomized communication complexity} of $f$, denoted by $R_\gamma(f)$, is the minimal cost of any $\gamma$-error protocol for $f$.

The multi-party communication complexity model is a natural generalization to $k\ge 2$ parties, where each party has a part of the input, and the parties are required to jointly compute a function $f:{\mathcal X}^k \rightarrow {\mathcal Z}$ by exchanging messages according to a randomized protocol. 

For more information about communication complexity, we refer the reader to \cite{kushilevitz1997communication}.

\subsection{Information complexity}
\label{sec:ic-intro}

The \emph{communication complexity} quantifies the number of bits that need to be exchanged by two or more players in order to compute some function together, while the \emph{information complexity} quantifies the amount of information of the inputs that must be revealed by the protocol. The information complexity has been extensively studied in the last decade, e.g., \cite{Chakrabarti01,BYJKS02,barak2010compress,WZ12,braverman2012interactive}. There are several definitions of information complexity. In this paper, we follow the definition used in \cite{BYJKS02}. In the two-party case, let $\mu$ be a distribution on $\mathcal{X}\times \mathcal{Y}$, we define the information cost of $\Pi$ measured under $\mu$ as
$$
IC_{\mu}(\Pi)=I(X, Y ; \Pi_{XY} | R)
$$
where $(X,Y)\sim \mu$ and $R$ is the public randomness used in $\Pi$. For notational convenience, we will omit the subscript of $\Pi_{XY}$ and simply use $I(X,Y;\Pi | R)$ to denote the information cost of $\Pi$.  It should be clear that $IC_{\mu}(\Pi)$ is a function of $\mu$ for a fixed protocol $\Pi$. Intuitively, this measures how much information of $X$ and $Y$ is revealed by the transcript $\Pi_{XY}$. For any function $f$, we define the information complexity of $f$ parametrized by $\mu$ and $\gamma$ as
$$
IC_{\mu,\gamma}(f)=\min_{\gamma\hbox{-error } \Pi} IC_\mu(\Pi).
$$

\subsection{Information complexity and coordinator model}
\label{sec:ic-coordinator}

We can indeed extend the above definition of information complexity to $k$-party coordinator model. That is, let $X^i$ be the input of player $i$ with $(X^1, \ldots, X^k) \sim \mu$ and $\Pi$ be the whole transcript, then we could define $IC_{\mu}(\Pi)=I(X^1, X^2,\ldots, X^k;\Pi | R)$. However, such a definition does not fully explore the point-to-point communication feature of the coordinator model. Indeed, the lower bound we can prove using such a definition is at most what we can prove under the blackboard model and our problem admits a simple algorithm with communication $O(n \log n + k)$ in the blackboard model. 
In this paper we give a new definition of information complexity for the coordinator model, which allows us to prove higher lower bounds compared with the simple generalization. Let $\Pi^i$ be the transcript between player $i$ and the coordinator, thus $\Pi = \Pi^1 \circ \Pi^2 \circ \ldots \circ \Pi^k$. We define the information cost for a function $f$ with respect to input distribution $\mu$ and the error parameter $\gamma \in [0,1]$ in the coordinator model as 
$$
IC_{\mu, \gamma}(f) = \min_{\gamma\hbox{-error} \Pi} \sum_{i=1}^k I(X^1, X^2,\cdots,  X^k; \Pi^i).
$$

\begin{theorem}
\label{thm:R-IC}
$R_{\gamma}(f) \ge IC_{\mu, \gamma}(f)$ for any distribution $\mu$.
\end{theorem}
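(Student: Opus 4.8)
The plan is to show that any $\gamma$-error protocol $\Pi$ for $f$ in the coordinator model satisfies
$$
|\Pi| \;\geq\; \sum_{i=1}^k I(X^1,\ldots,X^k;\Pi^i),
$$
which immediately gives $R_\gamma(f)\geq IC_{\mu,\gamma}(f)$ upon taking the infimum over protocols and then the minimum in the definition of $IC_{\mu,\gamma}(f)$. The starting point is the elementary fact that mutual information is bounded by entropy, and entropy of a string is bounded by its length in bits: for each $i$,
$$
I(X^1,\ldots,X^k;\Pi^i) \;\leq\; H(\Pi^i) \;\leq\; \E\big[\,|\Pi^i|\,\big] \;\leq\; \max |\Pi^i|,
$$
where $\Pi^i$ is the sub-transcript exchanged between site $p^i$ and the coordinator. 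The first step is therefore to recall/invoke $I(A;B)\le H(B)$ and $H(\Pi^i)\le \E[|\Pi^i|]$ (the latter because, conditioned on any prefix, the next symbol of a self-delimiting transcript has entropy at most its expected length; in the worst-case-cost convention one can even use $H(\Pi^i)\le\max|\Pi^i|$ directly).

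The second step is to combine these $k$ bounds. Since the total transcript is the concatenation $\Pi=\Pi^1\circ\Pi^2\circ\cdots\circ\Pi^k$ and the $\Pi^i$ occupy disjoint portions of the communication, the worst-case communication cost satisfies $\max_{x}|\Pi_x| \ge \sum_{i=1}^k \max_x |\Pi^i_x|$ — more carefully, for every input $x$ we have $|\Pi_x| = \sum_i |\Pi^i_x|$, so $\max_x |\Pi_x| = \max_x \sum_i |\Pi^i_x| \ge \sum_i \E_\mu[|\Pi^i_X|]$, since a max is at least an average. Chaining this with the per-site bound from the first step yields $\text{cost}(\Pi)\ge \sum_i I(X^1,\ldots,X^k;\Pi^i) = IC_\mu(\Pi)$.

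The third and final step is bookkeeping: given any $\gamma$-error protocol $\Pi$ for $f$ we have just shown $\text{cost}(\Pi)\ge IC_\mu(\Pi)\ge \min_{\gamma\text{-error }\Pi'} IC_\mu(\Pi') = IC_{\mu,\gamma}(f)$; taking the minimum of $\text{cost}(\Pi)$ over all $\gamma$-error protocols gives $R_\gamma(f)\ge IC_{\mu,\gamma}(f)$ for the fixed $\mu$, and since $\mu$ was arbitrary the statement follows. I do not anticipate a genuine obstacle here — the result is a soft, definitional inequality. The one point requiring mild care is whether public randomness $R$ should be conditioned on in the coordinator-model definition (the blackboard definition in Section~\ref{sec:ic-intro} writes $I(X,Y;\Pi|R)$, whereas the coordinator-model definition writes $I(X^1,\ldots,X^k;\Pi^i)$ without $R$); if $R$ appears, one simply notes $I(X^1,\ldots,X^k;\Pi^i\mid R)\le H(\Pi^i\mid R)\le H(\Pi^i)\le \E[|\Pi^i|]$ and the argument is unchanged. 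Also worth a sentence: private randomness of the sites does not inflate $|\Pi^i|$ since it is never transmitted, so the length bound on $H(\Pi^i)$ is unaffected.
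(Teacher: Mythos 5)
Your proposal is correct and follows essentially the same route as the paper's proof: decompose the total transcript length as $\sum_i |\Pi^i|$, bound each term via $I(X^1,\ldots,X^k;\Pi^i)\le H(\Pi^i)\le \E[|\Pi^i|]$, and finish by noting that the worst-case cost dominates the expected cost. Your extra remarks on self-delimiting transcripts and on conditioning on public randomness are sensible but not needed beyond what the paper already assumes.
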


\begin{proof}
For any protocol $\Pi$, the expected size of its transcript is (we abuse the notation by using $\Pi$ also for the transcript)
$\E[\abs{\Pi}] = \sum_{i=1}^k \E[\abs{\Pi^i}]\ge \sum_{i=1}^k H(\Pi^i)\ge IC_{\mu,\gamma}(\Pi).$
The theorem then follows because the worst-case communication cost is at least the average-case communication cost.
\end{proof}

\begin{lemma}
\label{lem:IC-in-MP}
If $Y$ is independent of the random coins used by the protocol $\Pi$, then
$$
IC_{\mu, \gamma}(f) \ge \min_{\Pi} \sum_{i=1}^k I(X^i, Y; \Pi^i).
$$
\end{lemma}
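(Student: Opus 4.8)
The plan is to establish the bound one summand at a time: I will show that for every $\gamma$-error protocol $\Pi$ and every $i\in[k]$,
$$
I(X^1,X^2,\ldots,X^k;\Pi^i)\ \ge\ I(X^i,Y;\Pi^i).
$$
Summing over $i$ gives $\sum_{i=1}^k I(X^1,\ldots,X^k;\Pi^i)\ge\sum_{i=1}^k I(X^i,Y;\Pi^i)\ge\min_{\Pi}\sum_{i=1}^k I(X^i,Y;\Pi^i)$; applying this to the $\gamma$-error protocol attaining $IC_{\mu,\gamma}(f)$ yields the claim (note that the minimum on the right is unconstrained, so every $\gamma$-error protocol is in particular admissible there).

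The key observation is that, conditioned on all the inputs, the auxiliary variable $Y$ is irrelevant to the transcript:
$$
I(Y;\Pi^i\mid X^1,X^2,\ldots,X^k)=0.
$$
Indeed, the full transcript $\Pi$, hence each $\Pi^i$, is a deterministic function of the inputs $(X^1,\ldots,X^k)$ together with the public and private randomness of the protocol; by hypothesis this randomness is independent of $Y$, and it is also independent of the inputs, so conditioned on $(X^1,\ldots,X^k)$ the transcript $\Pi^i$ is a function of randomness that is still independent of $Y$. Thus $\Pi^i$ and $Y$ are conditionally independent given $(X^1,\ldots,X^k)$, which is the displayed identity.

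Given this, the per-coordinate inequality follows from two applications of the chain rule for mutual information. Writing $X^{-i}=(X^1,\ldots,X^{i-1},X^{i+1},\ldots,X^k)$,
$$
I(X^1,\ldots,X^k;\Pi^i)=I(X^1,\ldots,X^k;\Pi^i)+I(Y;\Pi^i\mid X^1,\ldots,X^k)=I(X^1,\ldots,X^k,Y;\Pi^i),
$$
using the identity above for the last step, and then
$$
I(X^1,\ldots,X^k,Y;\Pi^i)=I(X^i,Y;\Pi^i)+I(X^{-i};\Pi^i\mid X^i,Y)\ \ge\ I(X^i,Y;\Pi^i)
$$
by nonnegativity of conditional mutual information.

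The only point that needs care is the precise reading of ``$Y$ is independent of the random coins of $\Pi$'': what is actually used is that, conditioned on the whole input tuple $(X^1,\ldots,X^k)$, the protocol's randomness remains independent of $Y$, equivalently that the protocol's randomness is drawn independently of the pair $\bigl((X^1,\ldots,X^k),Y\bigr)$. This is the standard convention that the coins are independent of the entire input-side probability space, and I will state it explicitly; it is exactly what makes $I(Y;\Pi^i\mid X^1,\ldots,X^k)=0$ hold. Beyond that there is no substantive obstacle — the lemma is a bookkeeping device that lets the later direct-sum argument replace the global input $(X^1,\ldots,X^k)$ appearing in $\Pi^i$'s information by the much smaller pair $(X^i,Y)$ attached to a single two-party sub-instance.
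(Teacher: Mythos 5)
Your proof is correct and follows essentially the same route as the paper: both hinge on the observation that $\Pi^i$ is conditionally independent of $Y$ given the full input $(X^1,\ldots,X^k)$ (because the transcript is determined by the inputs and coins, and the coins are independent of $Y$), from which $I(X^1,\ldots,X^k;\Pi^i)\ge I(X^i,Y;\Pi^i)$ follows by the data processing inequality — your two chain-rule applications are just the standard proof of that inequality written out. The paper invokes the DPI in one line; your version merely unpacks it, and your explicit remark about the coins being independent of the joint pair $\bigl((X^1,\ldots,X^k),Y\bigr)$ is a correct and worthwhile clarification of the hypothesis.
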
 

\begin{proof} The statement directly follows from the data processing inequality because given $X^1, X^2, \ldots, X^k$, $\Pi$ is fully determined by the random coins used, and is thus independent of $Y$.
\end{proof}


\section{Lower Bound}\label{sec:lb}

The lower bound is established by constructing a hard distribution for the input bipartite graph $G=(U,V, E)$ such that $\abs{U} = \abs{V} = n/2$. 

We first discuss the special case when the number of sites $k$ is equal to $n/2$, and each site is assigned one unique vertex in $U$ together with all its adjacent edges. We later discuss the general case.

A natural approach to approximately compute a maximum matching in a graph is to randomly sample a few edges from each site, and hope that we can find a good matching using these edges. To rule out such strategies, we construct random edges as follows.

We create a large number of \emph{noisy edges} by randomly picking a small set of nodes $V_0\subseteq V$ of size roughly $\alpha n/10$ and connect each node in $U$ to each node in $V_0$ independently at random with a constant probability. Note that there are $\Theta(\alpha n^2)$ such edges and the size of any matching that can be formed by these edges is at most $\alpha n/10$, which we will show to be asymptotically $\frac{\alpha}{2}\mathrm{OPT}$, where OPT is the size of a maximum matching. 

We next create a set of \emph{important edges} between $U$ and $V_1=V\setminus V_0$ such that each node in $U$ is adjacent to at most one random node in $V_1$. These edges are important in the sense that although there are only $\Theta(|U|) = \Theta(n)$ of them, the size of a maximum matching they can form is large, of the order $\mathrm{OPT}$. Therefore, to compute a matching of size at least $\alpha\mathrm{OPT}$, it is necessary to find and include $\Theta(\alpha\mathrm{OPT}) = \Theta(\alpha n)$ important edges. 

We then show that finding an important edge is in some sense equivalent to solving a set-disjointness (\DISJ) instance, and thus, we have to solve $\Theta(n)$ \DISJ\ instances. The concrete implementation of this intuition is via an embedding argument. 

In the general case, we create $n/(2k)$ independent copies of the above random bipartite graph, each with $2k$ vertices, and assign $n/(2k)$ vertices to each site (one from each copy). We then prove a direct-sum theorem using information complexity.

In the following, we introduce the two-party \AND\ problem and the two-party \DISJ\ problem. These two problems have been widely studied and tight bounds are known (e.g.~\cite{BYJKS02}). For our purpose, we need to prove stronger lower bounds for them. We then give a reduction from \DISJ\ to \match\ and prove an information cost lower bound for \match\ in Section~\ref{sec:matching}.


\subsection{The two-party \AND\ problem}
\label{sec:AND}

In the two-party \AND\ communication problem, Alice and Bob hold bits $a$ and $b$ respectively, and they want to compute the value of the function \AND$(a, b) = a \wedge b$.
 
Next we define input distributions for this problem. Let $A,B$ be random variables corresponding to the inputs of Alice and Bob respectively. Let $p \in (0,1/2]$ be a parameter. Let $\tau_q$ denote the probability distribution of a Bernoulli random variable that takes value $0$ with probability $q$ or value $1$ with probability $1-q$. We define two input probability distributions $\nu$ and $\mu$ for $(A, B)$ as follows. 

\begin{enumerate}
\item[$\nu$:] Sample $w\sim \tau_p$, and then set the value of $(a,b)$ as follows: if $w = 0$, let $a\sim \tau_{1/2}$ and $b = 0$; otherwise, if $w = 1$, let $a = 0$, and $b \sim \tau_{p}$. Thus, we have
\begin{eqnarray*}
\begin{array}{l}
(A, B)  =  \left\{
  \begin{array}{rl}
   (0, 0) & \quad \hbox{w. p.} \quad  p(3 - 2p)/2 \\
   (0, 1) & \quad \hbox{w. p.} \quad (1-p)^2 \\
   (1, 0) & \quad \hbox{w. p.} \quad p/2
  \end{array}
  \right. .
\end{array}
\end{eqnarray*}

\item[$\mu$:] Sample $w\sim \tau_p$, and then choose $(a,b)$ as above (i.e. sample $(a,b)$ according to $\nu$). Then, reset the value of $a$ to be $0$ or $1$ with equal probability (i.e. set $a\sim \tau_{1/2}$).  
\end{enumerate}

Here $w$ is an axillary random variable to break the dependence of $A$ and $B$, as we can see $A$ and $B$ are not independent, but conditionally independent given $w$.
Let $\delta$ be the probability that $(A, B) = (1, 1)$ under distribution $\mu$, which is $(1-p)^2/2$.

For the special case $p = 1/2$, by~\cite{BYJKS02}, it is shown that, for any \emph{private} coin protocol $\Pi$ with worst-case error probability $1/2-\beta$, the information cost
$$
I(A, B; \Pi | W)\ge \Omega(\beta^2)
$$ 
where the information cost is measured with respect to $\nu$ and $W$ is the random variable corresponding to $w$. Note that the above mutual information is different from the definition of information cost; it is referred to as \emph{conditional information cost} in~\cite{BYJKS02}. It is smaller than the standard information cost by data processing inequality ($\Pi$ and $W$ are conditionally independent given $A,B$). For a fixed protocol $\Pi$, the joint probability distribution $(A,B,\Pi,W)$ is determined by the distribution of $(A,B,W)$ and so is $I(A, B; \Pi | W)$. Therefore, when we say the (conditional) information cost is measured w.r.t. $\nu$, it means that the mutual information, $I(A, B; \Pi | W)$, is measured under the joint distribution $(A,B,\Pi,W)$ determined by $\nu$.

The above lower bound might seem counterintuitive, as the answer to \AND\ is always $0$ under the input distribution $\nu$ and a protocol can just output $0$ which does not reveal any information. However, such a protocol will have worst-case error probability $1$, i.e., it is always wrong when the input is $(1,1)$, contradicting the assumption. When distributional error is considered, the (distributional) error and information cost can be measured w.r.t. different input distributions. In our case, the error will be measured under $\mu$ and the information cost will be measured under $\nu$, and we will prove that any protocol having small error under $\mu$ must incur high information cost under $\nu$.

We next derive an extension that generalizes the result of~\cite{BYJKS02} to any $p\in (0,1/2]$ and distributional errors. We will also use the definition of \emph{one-sided error}.

\begin{definition}\label{def:one-sided}
For a two-party binary function $f(x,y)$, we say that a protocol has a \emph{one-sided error} $\gamma$ for $f$ under a distribution if it is always correct when the correct answer is $0$, and is correct with probability at least $1-\gamma$ conditional on $f(x, y) = 1$.
\end{definition}

Recall that $\delta$ is the probability that $(A,B)=(1,1)$ when $(A,B)\sim\mu$, which is $(1-p)^2/2$. Recall that $p\in(0,1/2]$, and thus $\delta\le 1/2$. Note that a distributional error of $\delta$ under $\mu$ is trivial, as a protocol that always outputs $0$ achieves this (but it has one-sided error $1$). Therefore, for two-sided error, we will consider protocols with error probability slightly better than the trivial protocol, i.e., with error probability $\delta-\beta$ for some $\beta\le\delta$. 

\begin{theorem} \label{thm:AND}
Suppose that $\Pi$ is a public coin protocol for \AND\, which has distributional error $\delta - \beta$, for $\beta \in (0, \delta)$, under input distribution $\mu$; let $R$ denote its public randomness. Then 
$$
I(A, B; \Pi | W ,R) = \Omega(p(\beta/\delta)^2) 
$$
where the information is measured with respect to $\nu$.  

If $\Pi$ has a one-sided error $1-\beta$, then 
$$
I(A, B; \Pi|W,R) = \Omega(p\beta).
$$
\end{theorem}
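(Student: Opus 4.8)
The plan is to reduce the general parameter case to the known result of \cite{BYJKS02} for $p=1/2$ by conditioning. Observe that under both $\nu$ and $\mu$ the auxiliary variable $W$ determines which of the two ``coordinates'' carries the non-trivial distribution: conditioned on $W=0$ we have $B=0$ deterministically and $A\sim\tau_{1/2}$, while conditioned on $W=1$ we have $A=0$ deterministically and $B\sim\tau_p$ (under $\nu$) or $A\sim\tau_{1/2}$ (under $\mu$). The input $(1,1)$ — the only input on which a protocol can err if it ever outputs $1$ — arises only from the $W=1$ branch. So I would first write
\begin{equation*}
I(A,B;\Pi\mid W,R) = p\cdot I(A,B;\Pi\mid W=1,R) + (1-p)\cdot I(A,B;\Pi\mid W=0,R) \ge p\cdot I(A,B;\Pi\mid W=1,R),
\end{equation*}
using non-negativity of mutual information. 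This already explains the factor of $p$ in both bounds, so the remaining task is to lower bound $I(A,B;\Pi\mid W=1,R)$ by $\Omega((\beta/\delta)^2)$ (resp.\ $\Omega(\beta)$).

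Next I would set up the ``cut-and-paste'' / rectangle-property machinery exactly as in \cite{BYJKS02}, but applied to the conditional-on-$W=1$ distribution rather than to a uniform one. Conditioned on $W=1$, $A$ is fixed to $0$ and $B\sim\tau_p$ under $\nu$; the hard instance is that the protocol must distinguish $(A,B)=(0,B)$ from $(A,B)=(1,B)$ when $B$ is drawn from $\tau_p$, and in particular it must behave differently on $(1,1)$ versus $(0,1)$. The standard argument bounds the conditional mutual information from below by (a constant times) the squared Hellinger distance between the transcript distributions $\Pi_{01}$ and $\Pi_{11}$, via the inequality relating mutual information to Hellinger distance for a binary-indexed mixture (the ``Hellinger characterization'' of information cost from \cite{BYJKS02}). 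Concretely, conditioning on $W=1$, the relevant piece of $A$ is a $\tau_{1/2}$ bit (under $\mu$) so $I(A,B;\Pi\mid W=1,R)$ is, up to constants, at least $h^2(\Pi_{01},\Pi_{11})$ where $\Pi_{01},\Pi_{11}$ are the transcript distributions on the two inputs that differ only in Alice's bit, averaged over $B\sim\tau_p$ — though I would need to be careful that $B=1$ (needed to reach the input $(1,1)$) has probability $1-p$, not $1$, under $\tau_p$, contributing another constant factor that does not affect the stated bounds. Then I invoke the rectangle property of deterministic protocols: for any fixed transcript $\pi$ and fixed randomness, $\Pr[\Pi=\pi\mid a,b]=\Pr[\Pi=\pi\mid a]\cdot\Pr[\Pi=\pi\mid b]$ up to the normalization, which yields the Pythagorean-type identity $h^2(\Pi_{00},\Pi_{11}) = h^2(\Pi_{01},\Pi_{10})$ and hence, by the triangle inequality for Hellinger distance, $h(\Pi_{01},\Pi_{11})+h(\Pi_{10},\Pi_{11}) \ge h(\Pi_{00},\Pi_{11})$, or some such combination, so that a protocol whose output distributions on $(0,1)$ and $(1,1)$ are Hellinger-close forces the output on $(1,1)$ to be close to the all-$0$ output, contradicting the error bound.

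The quantitative heart, and the step I expect to be the main obstacle, is converting the distributional error guarantee under $\mu$ into a lower bound on $h(\Pi_{01},\Pi_{11})$. For the two-sided case: the protocol errs with probability at most $\delta-\beta$ under $\mu$; since the only inputs on which the answer is $1$ occur with total probability $\delta$ and the only way to avoid errors there is to output $1$, a counting argument shows the protocol must output $1$ on $(1,1)$ with probability at least $\beta/\delta$ — more precisely, $\Pr[\Pi(1,1)=1]\ge \beta/\delta$ minus the rate at which it falsely outputs $1$ elsewhere, and since outputting $1$ on a zero-input is itself an error charged against the $\delta-\beta$ budget, one gets $\Pr_{\mu}[\text{correct on }(1,1)\text{-inputs and no false }1] \ge \beta$, hence the transcript distribution on $(1,1)$ differs in total variation from that on the all-zero input by $\Omega(\beta/\delta)$; by Lemma~\ref{lem:tothell} this gives $h \ge \Omega(\beta/\delta)$, and squaring yields $\Omega((\beta/\delta)^2)$. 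Combined with the factor $p$ this is the claimed $\Omega(p(\beta/\delta)^2)$. For the one-sided case, the protocol is \emph{never} wrong when the answer is $0$, so it never outputs $1$ on any zero-input, which makes $\Pi_{00}$ (the transcript when the output must be $0$) and $\Pi_{11}$ genuinely different whenever the protocol ever outputs $1$; a one-sided error of $1-\beta$ means it outputs $1$ on $(1,1)$ with probability at least $\beta$, so the total-variation (hence Hellinger) distance is at least $\Omega(\beta)$, but now — crucially — because the protocol is one-sided we get to use that $h^2 \ge$ distance without the quadratic loss in one direction: the rectangle/cut-and-paste identity lets us trade the bound on $h(\Pi_{00},\Pi_{11})$ directly for a bound on $h(\Pi_{01},\Pi_{11})$ and, since $\Pi$ never outputs $1$ on inputs with a $0$, a more careful accounting (essentially, the probability mass of transcripts leading to output $1$ is concentrated away from the zero-inputs) upgrades the $\beta^2$ to $\beta$. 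Carrying the constants through this one-sided improvement cleanly, and making sure the conditioning on $W=1$ interacts correctly with the public randomness $R$ (which should just be conditioned on throughout, using Lemma~\ref{lem:IC-in-MP}-style reasoning that $W$ is independent of $R$), is where I would spend the most care.
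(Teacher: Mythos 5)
Your opening move is where the proof breaks. You decompose $I(A,B;\Pi\mid W,R)$ over the two values of $W$ and then \emph{discard the $W=0$ branch}, keeping only $p\cdot I(A,B;\Pi\mid W=1,R)$. Two problems. First, the weights are swapped: under $\tau_p$, $W=0$ has probability $p$ and $W=1$ has probability $1-p$, so the branch you keep has weight $1-p$, not $p$. Second, and more fundamentally, under $\nu$ conditioned on $W=1$ we have $A\equiv 0$, so that branch measures only $I(B;\Pi_{0B})$ and can be related only to $h^2(\Pi_{00},\Pi_{01})$ --- transcripts on inputs that differ in Bob's bit with Alice's bit fixed to $0$. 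It carries no information whatsoever about $\Pi_{10}$ or $\Pi_{11}$. Your claim that this branch is bounded below by $h^2(\Pi_{01},\Pi_{11})$ silently switches to measuring information under $\mu$ (where $A\sim\tau_{1/2}$ given $W=1$), but the theorem measures information under $\nu$. The paper's argument needs \emph{both} branches: the $W=0$ term yields $p\,h^2(\Pi_{00},\Pi_{10})$ and the $W=1$ term yields $(1-p)\cdot 2p\,h^2(\Pi_{00},\Pi_{01})$ (the factor $2p$ comes from concavity of mutual information applied to the decomposition $\tau_p=2p\,\tau_{1/2}+(1-2p)\,\tau_0$ --- this, not the branch weight, is the true source of the factor $p$, and it is not the ``constant factor'' you wave at). Only with both Hellinger terms in hand can one apply the triangle inequality and the cut-and-paste lemma, $h(\Pi_{00},\Pi_{10})+h(\Pi_{00},\Pi_{01})\ge h(\Pi_{01},\Pi_{10})=h(\Pi_{00},\Pi_{11})$, and thence reach the input $(1,1)$ where the error assumption bites.

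The one-sided bound $\Omega(p\beta)$ is also not established by your sketch. The Hellinger--total-variation relation inherently costs a square, and ``a more careful accounting upgrades the $\beta^2$ to $\beta$'' is an aspiration, not an argument; the rectangle property does not by itself remove the quadratic loss. The paper's mechanism is different and is the key missing idea: run $m=O(1/\beta)$ independent copies of $\Pi$ in parallel and output $1$ iff some copy does, obtaining a protocol with constant (two-sided) distributional error under $\mu$; apply the first part to get information cost $\Omega(p)$ for the compound protocol; then use sub-additivity of mutual information (the copies are conditionally independent given $A,B,W$) to conclude $I(A,B;\Pi\mid W)\ge\Omega(p/m)=\Omega(p\beta)$. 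Your conversion of the error guarantee into a Hellinger lower bound for the two-sided case is in the right spirit, but the paper does it via a convex combination over all three zero-probability-one inputs weighted by $\mu$, paired with $p^e_{xy}+p^e_{11}\ge 1-\sqrt{2}\,h(\Pi_{xy},\Pi_{11})$, which is the precise form needed to feed the cut-and-paste chain.
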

If we set $p=1/2$, the first part of Theorem~\ref{thm:AND} recovers the result of~\cite{BYJKS02}.

\begin{proof}[of Theorem~\ref{thm:AND}]
We will use $\Pi_{ab}$ to denote the transcript when the input is $a,b$. By definition,
\begin{eqnarray}
I(A,B; \Pi_{AB} | W)
&=& p\cdot I(A, 0; \Pi_{A0} | W=0)+  (1-p)\cdot I(0, B; \Pi_{0B} | W=1)\nonumber \\
&=&  p\cdot I(A; \Pi_{A0})+(1-p)\cdot I(B; \Pi_{0B})\label{eqIUV}.
\end{eqnarray}

With a slight abuse of notation, in (\ref{eqIUV}), $A$ and $B$ are random variables with distributions $\tau_{1/2}$ and $\tau_p$, respectively.

For any random variable $U$ with distribution $\tau_{1/2}$, the following two inequalities were established in \cite{BYJKS02}:
\begin{equation}
I(U; \Pi_{U0})\ge h^2(\Pi_{00},\Pi_{10})
\label{equ:Ih1}
\end{equation}
and
\begin{equation}
I(U; \Pi_{0U}) \ge h^2(\Pi_{00}, \Pi_{01})
\label{equ:Ih2}
\end{equation}
 where $h(X,Y)$ is the Hellinger distance between two random variables $X$ and $Y$. 

We can apply these bounds to lower bound the term $I(A; \Pi_{A0})$. However, we cannot apply them to lower bound the term $I(B; \Pi_{0B})$ when $p < 1/2$ because then the distribution of $B$ is not $\tau_{1/2}$. To lower bound the term $I(B; \Pi_{0B})$, we will use the following well-known property, whose proof can be found in the book \cite{cover2006} (Theorem 2.7.4).
 
\begin{lemma}
Let $(X,Y)\sim p(x,y)=p(x)p(y|x)$. The mutual information $I(X,Y)$ is a concave function of $p(x)$ for fixed $p(y|x)$.
\label{lem:mutinf}
\end{lemma}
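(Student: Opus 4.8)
The plan is to use the decomposition $I(X;Y) = H(Y) - H(Y\mid X)$ and argue separately about the two terms as functions of $p(x)$, the latter ranging over the probability simplex on $\mathcal{X}$, which is convex so that concavity is well-defined. Concretely, I would show that $H(Y\mid X)$ is affine (hence in particular concave) in $p(x)$, that $H(Y)$ is concave in $p(x)$, and then conclude since the sum of a concave function and an affine function is concave.

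For the conditional entropy term, write $H(Y\mid X) = \sum_x p(x)\, H(Y\mid X=x)$. Each summand $H(Y\mid X=x)$ depends only on the conditional distribution $p(\cdot\mid x)$, which is held fixed by hypothesis; hence each $H(Y\mid X=x)$ is a constant, and $H(Y\mid X)$ is a linear function of the vector $(p(x))_{x\in\mathcal{X}}$. For the output entropy term, note that the marginal $p(y) = \sum_x p(x)\, p(y\mid x)$ is a linear function of $p(x)$ (the coefficients $p(y\mid x)$ being fixed), so $p(x)\mapsto p_Y$ is an affine map from the simplex on $\mathcal{X}$ into the simplex on $\mathcal{Y}$. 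The Shannon entropy $q\mapsto H(q) = -\sum_y q(y)\log q(y)$ is a concave function of the distribution $q$ — this follows from concavity of the scalar map $t\mapsto -t\log t$ on $[0,1]$, equivalently from Jensen's inequality applied to the concave function $-\log$. Since the composition of a concave function with an affine map is concave, $p(x)\mapsto H(Y)$ is concave.

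Combining the two observations, $I(X;Y) = H(Y) - H(Y\mid X)$ is, as a function of $p(x)$ with $p(y\mid x)$ fixed, the difference of a concave function and a linear function, hence concave. This is exactly the claimed statement.

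There is no substantive obstacle in this argument; the only point deserving care is the choice of decomposition: one must use $I(X;Y) = H(Y) - H(Y\mid X)$ rather than $I(X;Y) = H(X) - H(X\mid Y)$, because while $H(X)$ is itself concave in $p(x)$, the term $H(X\mid Y)$ is neither affine nor concave in $p(x)$ in general, so that form does not immediately give the conclusion. With the decomposition $H(Y) - H(Y\mid X)$, the fact that $H(Y\mid X)$ becomes \emph{linear} once $p(y\mid x)$ is frozen is precisely what makes the argument go through.
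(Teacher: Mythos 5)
Your proof is correct. The paper does not prove this lemma itself---it simply cites Theorem~2.7.4 of Cover and Thomas---and the argument given there is exactly the one you present: write $I(X;Y)=H(Y)-H(Y\mid X)$, observe that $H(Y\mid X)$ is linear in $p(x)$ once $p(y\mid x)$ is fixed, and that $H(Y)$ is the concave entropy functional composed with the affine map $p(x)\mapsto p(y)=\sum_x p(x)p(y\mid x)$, so the difference is concave; your remark about why the other decomposition $H(X)-H(X\mid Y)$ would not work is a sensible additional point of care.
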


Hence, the mutual information $I(B;\Pi_{0B})$ is a concave function of the distribution $\tau_p$ of $B$, since the distribution of $\Pi_{0B}$ is fixed given $B$. 

Recall that $\tau_p$ is the probability distribution that takes value $0$ with probability $p$ and takes value $1$ with probability $1-p$.
Note that $\tau_p$ can be expressed as a convex combination of $\tau_{1/2}$ and $\tau_0$ (always taking value $1$) as follows: $\tau_p = 2p \tau_{1/2} + (1-2p) \tau_0$. (Recall that $p$ is assumed to be smaller than $1/2$.) Let $B_0\sim \tau_{1/2}$ and $B_1 \sim \tau_0$. Then, using Lemma~\ref{lem:mutinf}, we have 
\begin{eqnarray*}
I(B; \Pi_{0B}) &\ge & 2p\cdot I(B_0; \Pi_{0B_0})+(1-2p)\cdot I(B_1; \Pi_{0B_1})\\
&& \ge 2p\cdot h^2(\Pi_{00},\Pi_{01})
\end{eqnarray*}  
where the last inequality holds by (\ref{equ:Ih2}) and non-negativity of mutual information. 

Thus, we have
\begin{eqnarray}
 I(A,B; \Pi_{AB} | W)
&=& p\cdot I(A; \Pi_{A0})+(1-p)\cdot I(B; \Pi_{0B}) \nonumber\\
&\ge& p \cdot h^2(\Pi_{00},\Pi_{10})+(1-p)2p\cdot h^2(\Pi_{00},\Pi_{01})\nonumber\\
&\ge& p \cdot( h^2(\Pi_{00},\Pi_{10})+ h^2(\Pi_{00},\Pi_{01})) 
\label{laststep}
\end{eqnarray}
where the last inequality holds because $p\le 1/2$.

We next show that if $\Pi$ is a protocol with error probability smaller than or equal to $\delta-\beta$ under distribution $\mu$, then
$$
h^2(\Pi_{00},\Pi_{10})+ h^2(\Pi_{00},\Pi_{01}) = \Omega((\beta/\delta)^2),
$$
which together with other above relations implies the first part of the theorem.

By the triangle inequality,
\begin{eqnarray}
h(\Pi_{00},\Pi_{10})+ h(\Pi_{00},\Pi_{01}) & \ge & h(\Pi_{01},\Pi_{10})\nonumber\\
&=&h(\Pi_{00},\Pi_{11})\label{eq:Hellinger1}
\end{eqnarray}
where the last equality is from the {\em cut-and-paste lemma} in \cite{BYJKS02} (Lemma 6.3).

Thus, we have
\begin{eqnarray}
 h(\Pi_{00},\Pi_{10})+ h(\Pi_{00},\Pi_{01})
&\ge& \frac{1}{2}h(\Pi_{00},\Pi_{10})+\frac{1}{2}(h(\Pi_{00},\Pi_{10})+ h(\Pi_{00},\Pi_{01}))\nonumber\\
&\ge& \frac{1}{2}(h(\Pi_{00},\Pi_{10})+h(\Pi_{00},\Pi_{11}))\nonumber\\
&\ge& \frac{1}{2} h(\Pi_{10},\Pi_{11})\label{eq:Hellinger2}
\end{eqnarray}
where the last inequality is by the triangle inequality.

Similarly,  it holds that
\begin{equation}
h(\Pi_{00},\Pi_{10})+ h(\Pi_{00},\Pi_{01})\ge \frac{1}{2} h(\Pi_{01},\Pi_{11}).\label{eq:Hellinger3}
\end{equation}

From~(\ref{eq:Hellinger1}), (\ref{eq:Hellinger2}) and (\ref{eq:Hellinger3}), for any positive real numbers $a$, $b$, and $c$ such that $a+b+c = 1$, we have
\begin{eqnarray}
 h(\Pi_{00},\Pi_{10})+ h(\Pi_{00},\Pi_{01})
& \ge & \frac{1}{2} (a \cdot h(\Pi_{00},\Pi_{11}) + b\cdot h(\Pi_{01},\Pi_{11}) \nonumber\\
&&+ c\cdot h(\Pi_{10},\Pi_{11})).\label{eqHellcon}
\end{eqnarray}

Let $p^e$ denote the error probability of $\Pi$ and $p^e_{xy}$ denote the error probability of $\Pi$ conditioned on that the input is $(x,y)$. Recall $\delta=\mu(1,1)\le 1/2$. We have 
\begin{eqnarray}
p^e & = & \mu(0,0) p^e_{00}+\mu(1,0)p^e_{10}+\mu(0,1)p^e_{01}+\delta p^e_{11} \nonumber\\
& \ge &\delta \left( \frac{\mu(0,0) p^e_{00}+\mu(1,0)p^e_{10}+\mu(0,1)p^e_{01}}{1-\delta}+ p^e_{11} \right)\nonumber\\
& = & \delta(a^*(p^e_{00}+p^e_{11})+b^*(p^e_{01}+p^e_{11})+c^*(p^e_{10}+p^e_{11})) \label{eq:convex}
\end{eqnarray}
where 
$$
a^* = \frac{\mu(0,0)}{1-\delta}, b^* = \frac{\mu(0,1)}{1-\delta} \hbox{ and } c^* = \frac{\mu(1,0)}{1-\delta},
$$
and clearly $a^*+b^*+c^*=1$.
Let $\Pi(x,y)$ be the output of $\Pi$ when the input is $(x,y)$, which is also a random variable.
Note that 
\begin{eqnarray}
p^e_{00}+p^e_{11}&=& \Pr[\Pi(0,0)=1]+\Pr[\Pi(1,1)=0]\nonumber\\
&=&1-(\Pr[\Pi(0,0)=0]-\Pr[\Pi(1,1)=0])\nonumber\\
&\ge& 1-d(\Pi_{00},\Pi_{11})
\end{eqnarray}
where $d(X,Y)$ denote the total variation distance between probability distributions of random variables $X$ and $Y$. Using Lemma~\ref{lem:tothell}, we have
\begin{equation}
p^e_{00}+p^e_{11} \ge 1 - \sqrt{2}h(\Pi_{00},\Pi_{11}).
\label{equ:pe1}
\end{equation}
By the same arguments, we also have 
\begin{equation}
p^e_{01}+p^e_{11} \ge 1-\sqrt{2}h(\Pi_{01},\Pi_{11})
\label{equ:pe2}
\end{equation}
and
\begin{equation}
p^e_{10}+p^e_{11} \ge 1-\sqrt{2}h(\Pi_{10},\Pi_{11}).
\label{equ:pe3}
\end{equation}

Combining (\ref{equ:pe1}), (\ref{equ:pe2}) and (\ref{equ:pe3}) with (\ref{eq:convex}) and the assumption that $p^e \le \delta - \beta$, we obtain
\begin{eqnarray*}
a^* h(\Pi_{00},\Pi_{11}) + b^* h(\Pi_{10},\Pi_{11})+ c^* h(\Pi_{01},\Pi_{11}) &\ge& \frac{\beta}{\sqrt{2}\delta}.
\end{eqnarray*}

By (\ref{eqHellcon}), we have
$$
h(\Pi_{00},\Pi_{10})+ h(\Pi_{00},\Pi_{01})\ge \frac{\beta}{2\sqrt{2}\delta}.
$$
From the Cauchy-Schwartz inequality, it follows 
\begin{eqnarray*}
 h^2(\Pi_{00},\Pi_{10})+ h^2(\Pi_{00},\Pi_{01})
& \ge & \frac{1}{2} (h(\Pi_{00},\Pi_{10})+ h(\Pi_{00},\Pi_{01}))^2.
\end{eqnarray*}

Hence, we have
$$
h^2(\Pi_{00},\Pi_{10})+ h^2(\Pi_{00},\Pi_{01}) \ge \frac{\beta^2}{16\delta^2}
$$
which combined with (\ref{laststep}) establishes the first part of the theorem.

We now go on to prove the second part of the theorem. Assume $\Pi$ has a one-sided error $1-\beta$, i.e., it outputs $1$ with probability at least $\beta$ if the input is $(1,1)$, and always output correctly otherwise. To boost the success probability, we can run $m$ parallel instances of the protocol and answer $1$ if and only if there exists one instance which outputs $1$. Let $\Pi'$ be this new protocol, and it is easy to see that it has a one-sided error of $(1-\beta)^m$. By setting $m=O(1/\beta)$, it is at most $1/10$, and thus the (two-sided) distributional error of $\Pi'$ under $\mu$ is smaller than $\delta/10$. By the first part of the theorem, we know $I(A, B; \Pi' | W) = \Omega(p)$. We also have
\begin{eqnarray}
I(A, B; \Pi' | W)&=& I(A, B; \Pi_1,\Pi_2,\ldots, \Pi_m | W) \nonumber \\
&\le& \sum_{i=1}^m I(A, B; \Pi_i | W) \nonumber\\
&=& m I(A, B; \Pi | W), \nonumber
\end{eqnarray}
where the inequality follows from the sub-additivity and the fact that $\Pi_1, \Pi_2,\ldots, \Pi_m$ are conditionally independent of each other given $A, B$ and $W$. Thus, we have $I(A, B; \Pi | W)\ge \Omega(p/m) = \Omega(p\beta)$.
\end{proof}

\subsection{The two-party \DISJ\ communication problem}
\label{sec:DISJ}
The two-party \DISJ\  communication problem with two players, Alice and Bob, who hold strings of bits $x = (x_1,x_2, \ldots, x_k)$ and $y = (y_1, y,\ldots, y_k)$, respectively, and they want to compute 
$$
\hbox{\DISJ}(x,y) = \hbox{\AND}(x_1,y_1)\vee \cdots \vee \hbox{\AND}(x_k,y_k). 
$$
By interpreting $x$ and $y$ as indicator vectors that specify subsets of $[k]$, \DISJ$(x,y)=1$ if and only if the two sets represented by $x$ and $y$ are disjoint. Note that this accommodates the \AND\  problem as a special case when $k = 1$. 

Let $A = (a_1,a_2, \ldots, a_k)$ be Alice's input and $B = (b_1, b_2,\ldots,b_k)$ be Bob's input. We define two input distributions $\nu_k$ and $\mu_k$ for $(A,B)$ as follows. 

\begin{enumerate}
\item[$\nu_k$:] For each $i\in[k]$, independently sample $(a_i,b_i)\sim \nu$, and let $w_i$ be the corresponding auxiliary random variable (see the definition of $\nu$). Define $w=(w_1,w_2,\cdots, w_k)$. 

\item[$\mu_k$:] Let $(a,b)\sim \nu_k$, then pick $d$ uniformly at random from $[k]$, and reset $a_d$ to be $0$ or $1$ with equal probability. Note that $(a_d, b_d) \sim \mu$, and the probability that \DISJ$(A,B)=1$ is equal to $\delta$. 
\end{enumerate}

We define the one-sided error for \DISJ\ similarly: A protocol has a one-sided error $\delta$ for \DISJ\ if it is always correct when \DISJ$(x,y) = 0$, and is correct with probability at least $1-\delta$ when \DISJ$(x,y) = 1$.
 
\begin{theorem}
\label{thm:DISJ}
Let $\Pi$ be any public coin protocol for \DISJ\ with error probability $\delta - \beta$ on input distribution $\mu_k$, where $\beta \in (0, \delta)$, and let $R$ denote the public randomness used by the protocol. Then 
$$
I(A,B; \Pi | W, R) = \Omega(kp (\beta/\delta)^2)
$$
where the information is measured w.r.t. $\mu_k$. 

If $\Pi$ has a one-sided error $1-\beta$, then 
$$
I(A, B; \Pi | W, R) = \Omega(k p \beta).
$$
\end{theorem}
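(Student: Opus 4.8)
The plan is to prove Theorem~\ref{thm:DISJ} by a direct-sum argument that reduces the $k$-coordinate \DISJ\ to the single-coordinate \AND\ of Section~\ref{sec:AND}, in the style of~\cite{BYJKS02}; as in Theorem~\ref{thm:AND}, the information cost is measured under the collapsing product distribution $\nu_k$ (under which \DISJ\ is identically $0$), while the error is controlled under $\mu_k$. The first step is to split the information cost coordinate by coordinate. Under $\nu_k$, conditioned on the auxiliary vector $W=(W_1,\dots,W_k)$ and the public randomness $R$, the pairs $(A_1,B_1),\dots,(A_k,B_k)$ are mutually independent, so applying the super-additivity of mutual information conditionally on each value of $(W,R)$ and averaging gives
$$I(A,B;\Pi\mid W,R)\ \ge\ \sum_{i=1}^{k} I(A_i,B_i;\Pi\mid W,R).$$

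Next, for a fixed coordinate $i$, I would lower-bound $I(A_i,B_i;\Pi\mid W,R)$ by embedding a single \AND\ instance into that coordinate. Given an \AND\ input with random variables $(A_i,B_i)\sim\nu$ and auxiliary bit $W_i$, Alice and Bob sample $W_{-i}=(W_j)_{j\ne i}$ with fresh public coins and then fill in the remaining coordinates from $\nu$ using private coins: for each $j\ne i$, if $W_j=0$ then Bob sets $b_j=0$ and Alice draws $a_j\sim\tau_{1/2}$, and if $W_j=1$ then Alice sets $a_j=0$ and Bob draws $b_j\sim\tau_p$. They put $(A_i,B_i)$ in coordinate $i$, run $\Pi$ on the resulting \DISJ\ instance, and output $\Pi$'s answer; call this \AND\ protocol $\Pi^{(i)}$, with public randomness $R'=(R,W_{-i})$. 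Since every coordinate $j\ne i$ is drawn from $\nu$, which never equals $(1,1)$, we have $\hbox{\AND}(A_j,B_j)=0$ for $j\ne i$, so \DISJ\ of the embedded instance equals $\hbox{\AND}(A_i,B_i)$; hence $\Pi^{(i)}$ is correct for \AND\ whenever $\Pi$ is correct. Moreover, the transcript of $\Pi^{(i)}$ on its input is literally the transcript of $\Pi$ on the embedded instance, and the embedded instance has exactly the law $\nu_k$, so $I(A_i,B_i;\Pi^{(i)}\mid W_i,R') = I(A_i,B_i;\Pi\mid W,R)$.

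It remains to transfer the error and invoke Theorem~\ref{thm:AND}. If we instead feed $\Pi^{(i)}$ an \AND\ input drawn from $\mu$, the embedded \DISJ\ instance is exactly $\mu_k$ conditioned on the planted index being $i$; averaging over $i\in[k]$ it is exactly $\mu_k$. Writing $\beta_i=\delta-\Pr[\Pi^{(i)}\text{ errs under }\mu]$, this gives $\frac1k\sum_i(\delta-\beta_i)\le\delta-\beta$, i.e.\ $\sum_i\beta_i\ge k\beta$. For each $i$ with $\beta_i>0$, Theorem~\ref{thm:AND} applied to $\Pi^{(i)}$ (which then has distributional error $\delta-\beta_i$ under $\mu$) gives $I(A_i,B_i;\Pi\mid W,R)=\Omega(p(\beta_i/\delta)^2)$, and for $\beta_i\le0$ the term is trivially non-negative. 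Combining with the first step,
$$I(A,B;\Pi\mid W,R)\ \ge\ \Omega\!\left(\frac{p}{\delta^2}\right)\sum_{i:\beta_i>0}\beta_i^2\ \ge\ \Omega\!\left(\frac{p}{\delta^2}\right)\frac{\left(\sum_{i:\beta_i>0}\beta_i\right)^2}{k}\ \ge\ \Omega\!\left(\frac{kp\beta^2}{\delta^2}\right),$$
using the Cauchy-Schwarz inequality and $\sum_{i:\beta_i>0}\beta_i\ge\sum_i\beta_i\ge k\beta$ together with $|\{i:\beta_i>0\}|\le k$. The one-sided case is cleaner: if $\Pi$ has one-sided error $1-\beta$ for \DISJ, then because \DISJ\ of the embedded instance equals $\hbox{\AND}(A_i,B_i)$, each $\Pi^{(i)}$ inherits one-sided error $1-\beta$ for \AND, so the second part of Theorem~\ref{thm:AND} gives $I(A_i,B_i;\Pi\mid W,R)=\Omega(p\beta)$ for \emph{every} $i$, and summing yields $\Omega(kp\beta)$.

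The main obstacle is the two-sided case: the embedding only controls the \emph{average} over the planted coordinate $i$ of the errors of the protocols $\Pi^{(i)}$, and some $\Pi^{(i)}$ may even perform worse than the trivial always-$0$ protocol, so Theorem~\ref{thm:AND} does not apply to them. The remedy is to discard the coordinates with $\beta_i\le0$ and recover the $\Omega(k\beta^2)$ factor from those with $\beta_i>0$ by Cauchy-Schwarz, as above. The other point to be careful about is that the information cost must be measured under the collapsing distribution $\nu_k$: that is exactly what forces $\hbox{\AND}(A_j,B_j)=0$ for the non-planted coordinates $j\ne i$ and hence makes the embedded \DISJ\ instance collapse to the single \AND\ instance sitting in coordinate $i$.
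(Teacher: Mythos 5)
Your overall strategy is the same as the paper's: a direct-sum decomposition via super-additivity under the collapsing distribution $\nu_k$, an embedding of a single \AND\ instance into one coordinate of \DISJ, and an appeal to Theorem~\ref{thm:AND}. The one stylistic difference is how you aggregate the error over the planted coordinate: you fix the coordinate $i$, obtain per-coordinate errors $\delta-\beta_i$ with only $\sum_i\beta_i\ge k\beta$ guaranteed, and recover the quadratic dependence by Cauchy--Schwarz. The paper instead folds the uniformly random coordinate $J$ into the \emph{public randomness} of a single \AND\ protocol $\Pi'$; since Theorem~\ref{thm:AND} is stated with public randomness $R$ in the conditioning, $\Pi'$ has distributional error exactly $\delta-\beta$ under $\mu$ and the theorem applies once, with the information cost of $\Pi'$ being precisely the average $\frac1k\sum_j I(A_j,B_j;\Pi\mid W,R)$. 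Both routes are sound; the paper's avoids the case split on the sign of $\beta_i$ (and, in the one-sided case, your claim that \emph{each} $\Pi^{(i)}$ inherits one-sided error $1-\beta$ holds only on average over $i$, which is harmless for the linear bound but should be stated as such).

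There is, however, one genuine gap: you prove the lower bound for $I(A,B;\Pi\mid W,R)$ with the information measured under $\nu_k$, whereas the theorem asserts it with the information measured under $\mu_k$ --- and the $\mu_k$ version is what is actually consumed in Section~\ref{sec:matching}, where the inputs $(X^{j,i},Y^j)$ are distributed as $\mu_k$. These are different quantities, and the collapsing argument genuinely requires $\nu_k$, so a transfer step is needed. The paper bridges the two by introducing the indicator $\xi$ of whether the resampling of $A_d$ in the definition of $\mu_k$ leaves $A_d$ unchanged, and using $I(X;Y\mid Z)\ge I(X;Y\mid Z,\xi)-H(\xi)$: conditioned on $\xi=1$ the distribution $\mu_k$ collapses back to $\nu_k$, giving $I_{\mu_k}(A,B;\Pi\mid W,R)\ge \tfrac12 I_{\nu_k}(A,B;\Pi\mid W,R)-1$, which preserves the $\Omega(kp(\beta/\delta)^2)$ bound. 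You should add this (or an equivalent) final step; without it the statement as written is not established.
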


\begin{proof} We first consider the two-sided error case. Let $\Pi$ be a protocol for \DISJ\ with distributional error $\delta-\beta$ under $\mu_k$. Consider the following reduction from \AND\ to \DISJ.  Alice has input $u$, and Bob has input $v$. They want to decide the value of $u\wedge v$. They first publicly sample $j \in [k]$, and embed $u,v$ in the $j$-th position, i.e. set $a_j=u$ and $b_j=v$. Then they publicly sample $w_{j'}$ according to $\tau_p$ for all $j' \neq j$. Let $w_{-j} = (w_1, \ldots, w_{j-1}, w_{j+1}, \ldots, w_k)$. Conditional on $w_{j'}$, they sample $(a_{j'}, b_{j'})$ such that $(a_j,b_j)\sim \nu$ for each $j' \neq j$. Note that this step can be done using only private randomness, since, in the definition of $\nu$, $a_{j'}$ and $b_{j'}$ are independent given $w_{j'}$. Then they run the protocol $\Pi$ on the input $(a,b)$ and output whatever $\Pi$ outputs. Let $\Pi'$ denote this protocol for \AND. Let $U,V,A,B,W,J$ be the corresponding random variables of $u,v,a,b,w,j$ respectively. It is easy to see that if $(U,V)\sim \mu$, then $(A,B)\sim \mu_k$, and thus the distributional error of $\Pi'$ is $\delta-\beta$ under $\mu$. The public coins used in $\Pi'$ include $J$, $W_{-J}$ and the public coins $R$ of $\Pi$. 

We first analyze the information cost of $\Pi'$ under $(A,B)\sim\nu_k$. We have 
\begin{eqnarray}
 \frac{1}{k} I(A,B;\Pi | W,R)
&\ge& \frac{1}{k} \sum_{j=1}^k I(A_j,B_j;\Pi | W_j, W_{-j},R)\label{equ:first} \\
&=&\frac{1}{k}\sum_{j=1}^k I_{\nu}(U,V;\Pi' | W_j,J=j,W_{-j},R) \label{eq:AndtoDiSJ} \\
&=& I(U, V;\Pi' | W_J,J,W_{-J},R)\nonumber \\
 &=& \Omega(p (\beta/\delta)^2) \label{eq:lasteq}
\label{eq:useand}
\end{eqnarray}
where (\ref{equ:first}) is by the supper-additivity of mutual information, (\ref{eq:AndtoDiSJ}) holds because when $(U,V)\sim\nu$ the conditional distribution of $(U,V,\Pi, W_j,W_{-j},R)$ given $J = j$ is the same as the distribution of $(A_j,B_j,\Pi, W_j, W_{-j},R)$, and (\ref{eq:lasteq}) follows from Theorem~\ref{thm:AND} using the fact that $\Pi'$ has error $\delta-\beta$ under $\mu$. 

We have established that when $(A,B)\sim \nu_k$, it holds
\begin{equation}
I(A,B;\Pi | W,R)=\Omega(kp (\beta/\delta)^2).
\label{equ:IABP}
\end{equation} 

We now consider the information cost when $(A,B)\sim \mu_k$. Recall that to sample from $\mu_k$, we first sample $(a,b)\sim \nu_k$, and then pick $d$ uniformly at random from $[k]$ and reset $a_d$ to $0$ or $1$ with equal probability. Let $\xi$ be the indicator random variable of the event that the last step does not change the value of $a_d$. 

We note that for any jointly distributed random variables $X$, $Y$, $Z$ and $W$,
\begin{equation}
I(X;Y|Z) \ge I(X;Y|Z,W) - H(W).
\label{equ:mut}
\end{equation}
To see this note that by the chain rule for mutual information, we have
$$
I(X,W;Y|Z) = I(X;Y|Z) + I(W;Y|X,Z)
$$
and
$$
I(X,W;Y|Z) = I(W;Y|Z) + I(X;Y|W,Z).
$$
Combining the above two equalities, (\ref{equ:mut}) follows by the facts $I(W;Y|X,Z)\ge 0$ and $I(W;Y|Z) \le H(W|Z) \le H(W)$.

Let $(A,B)\sim \mu_k$ and $(A',B')\sim \nu_k$. We have
\begin{eqnarray*}
I(A,B;\Pi | W,R) &\ge& I(A,B;\Pi | W,R, \xi) - H(\xi)\\
&=& \frac{1}{2} I(A,B;\Pi | W,R, \xi=1) + \frac{1}{2}I(A,B;\Pi | W,R, \xi=0)-1\\
&\ge&  \frac{1}{2}I(A',B';\Pi | W,R)-1\\
&=& \Omega(kp (\beta/\delta)^2)
\end{eqnarray*}
where the first inequality is from (\ref{equ:mut}) and the last equality is by (\ref{equ:IABP}).

The proof for the one-sided error case is the same, except that we use the one-sided error lower bound $\Omega(p\beta)$ in Theorem~\ref{thm:AND} to bound (\ref{eq:useand}).
\end{proof}

\subsection{Proof of Theorem~\ref{thm:main}}
\label{sec:matching}

Here we provide a proof of Theorem~\ref{thm:main}. The proof is based on a reduction of \DISJ\ to \match. We first define the hard input distribution that we use for \match.

The input graph $G$ is assumed to be a random bipartite graph that consists of $r=n/(2k)$ disjoint, independent and identically distributed random bipartite graphs $G^1$, $G^2$, $\ldots$, $G^r$. Each bipartite graph $G^{j} = (U^j, V^j, E^j)$ has the set $U^j = \{u^{j, i}: i\in [k]\}$ of left vertices and the set $V^j = \{v^{j, i}: i\in [k]\}$ of right vertices, both of cardinality $k$. The sets of edges $E^1$, $E^2$, $\ldots$, $E^r$ are defined by a random variable $X$ that takes values in $\{0,1\}^{r\times k \times k}$ such that whether or not $(u^{j,i},v^{j,l})$ is an edge in $E^j$ is indicated by $X_l^{j,i}$.

The distribution of $X$ is defined as follows. Let $Y^1$, $Y^2$, $\ldots$, $Y^r$ be independent and identically distributed random variables with distribution $\mu_k(b)$.\footnote{$\mu_k(b)$ is the marginal distribution of $b$ of the joint distribution $\mu_k$.}. Then, for each $j\in [r]$, conditioned on $Y^j = y^j$, let $X^{j,1}$, $X^{j,2}$, \ldots, $X^{j,k}$ be independent and identically distributed random variables with distribution $\mu_k(a|y^j)$, where $\mu_k(a|y^j)$ is the conditional distribution of $a$ given $b=y^j$. Note that for every $j\in [r]$ and $i\in [k]$, $(X^{j,i},Y^j)\sim \mu_k$.

\begin{figure*}[t]
\begin{center}
\includegraphics[width=0.9\textwidth]{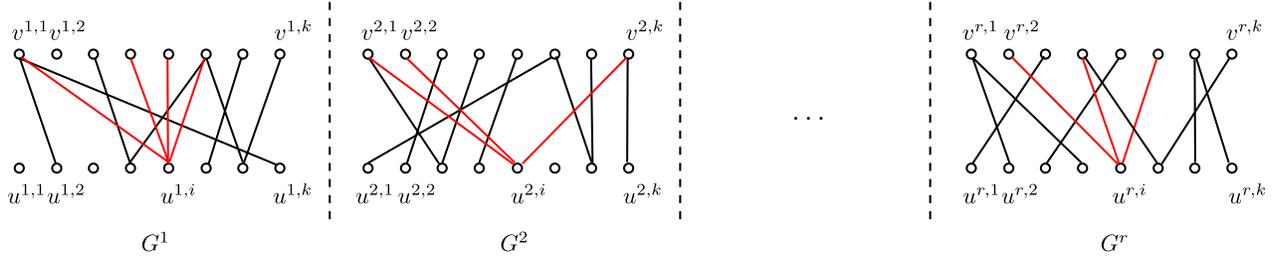}
\end{center}
\caption{Construction of input graph $G$ and its partitioning over sites: $G$ is a composition of bipartite graphs $G^1, G^2, \ldots,G^r$, each having $k$ vertices on each side of the bipartition; each site $i\in [k]$ is assigned edges incident to vertices $u^{1,i}, u^{2,i}, \ldots, u^{r,i}$; the neighbourhood set of vertex $u^{j,i}$ is determined by $X^{j,i}\in \{0,1\}^k$.}
\label{fig:graphpart}
\end{figure*}

We will use the following notation:
$$
X^i = (X^{1,i}, X^{2,i},\ldots,X^{r,i}) \hbox{ for } i\in [k],
$$
and
$$
X = (X^1, X^2, \ldots,X^k),
$$
where each $X^{j,i}\in\{0,1\}^{k}$, and $X^{j,i}_l$ is the $l$th bit.
In addition, we will also use the following notation:
$$
X^{-i} = (X^1, \ldots, X^{i-1},X^{i+1}, \ldots, X^k), \hbox{ for } i\in [k]
$$
and
$$
Y = (Y^1, Y^2, \ldots, Y^r).
$$

Note that $X$ is the input to \match, and $Y$ is {\em not} part of the input for \match, but it is used to construct $X$.

The edge partition of input graph $G$ over $k$ sites $p^1$, $p^2$, $\ldots$, $p^k$ is defined by  assigning all edges incident to vertices $u^{1,i}$, $u^{2,i}$,$\ldots$, $u^{r,i}$ to site $p^i$, or equivalently $p^{i}$ gets $X^{i}$. See Figure~\ref{fig:graphpart} for an illustration.

\paragraph{Input Reduction} Let $a \in \{0,1\}^k$ be Alice's input and $b \in \{0,1\}^k$ be Bob's input for \DISJ. We will first construct an input of \match\ from $(a,b)$, which has the above hard distribution.
In this reduction, in each bipartite graph $G^j$, we carefully embed $k$ instances of \DISJ. The output of a \DISJ\ instance determines whether or not a specific edge in the graph exists. This amounts to a total of $k r = n/2$ \DISJ\ instances embedded in graph $G$. The original input of Alice and Bob is embedded at a random position, and the other $n/2 - 1$ instances are sampled by Alice and Bob using public and private random coins. We then argue that if the original \DISJ\ instance is solved, then with a sufficiently large probability, at least $\Omega(n)$ of the embedded \DISJ\ instances are solved. Intuitively, if a protocol solves an \DISJ\ instance at a random position with high probability, then it should solve many instances at other positions as well, since the input distribution is completely symmetric. We will see that the original \DISJ\ instance can be solved by using any protocol solving \match, the correctness of which also relies on the symmetric property.

Alice and Bob construct an input $X$ for \match\ as follows:

\begin{enumerate}
\item Alice and Bob use public coins to sample an index $I$ from a uniform distribution on $[k]$. Alice constructs the input $X^{I}$ for site $p^I$, and Bob constructs input $X^{-I}$ for other sites (see Figure|~\ref{fig:cor-alice-bob}).

\item Alice and Bob use public coins to sample an index $J$ from a uniform distribution on $[r]$.

\item $G^J$ is sampled as follows: Alice sets $X^{J, I} = a$, and Bob sets $Y^{J} = b$. Bob privately samples 
$$
(X^{J, 1}, \ldots, X^{J,I-1}, X^{J,I}, X^{J,k})\sim \mu_k(a|Y^J)^{k-1}.
$$ 

\item For each $j\in [r]\setminus \{J\}$, $G^j$ is sampled as follows: 
	\begin{enumerate}
	\item Alice and Bob use public coins to sample $W^j = (W_1^j, W_2^j,\ldots, W_k^j) \sim \tau_p^k$. 

	\item Alice and Bob privately sample $X^{j,I}$ and $Y^j$ from $\nu_k(a|W^j)$ and $\nu_k(b|W^j)$, respectively. Bob privately and independently samples 
$$
(X^{j, 1}, \ldots, X^{j,I-1}, X^{j,I}, X^{j,k})\sim \mu_k(a|Y^j)^{k-1}.
$$ 

	\item Alice privately draws an independent sample $d$ from a uniform distribution on $[k]$, and resets $X^{j, I}_{d}$ to $0$ or $1$ with equal probability. As a result, $(X^{j,I},Y^j)\sim \mu_k$. For each $i\in [k]\setminus \{I\}$, Bob privately draws a sample $d$ from a uniform distribution on $[k]$ and resets $X^{j, i}_{d}$ to a sample from $\tau_{1/2}$.
\end{enumerate}
\end{enumerate}

\begin{figure}[h]
\begin{center}
\includegraphics[width=0.35\textwidth]{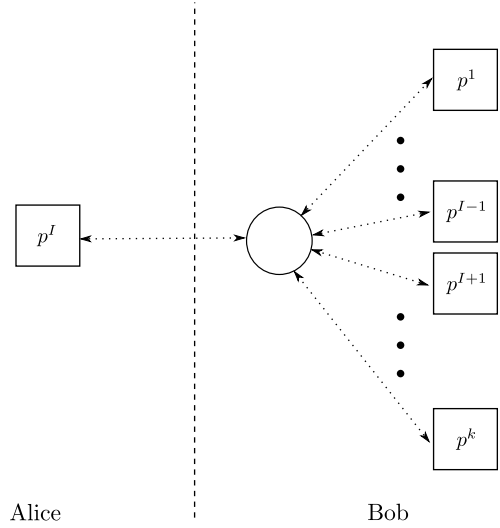}
\end{center}
\caption{Alice simulates site $p^I$ and Bob simulates the rest of the system.}
\label{fig:cor-alice-bob}
\end{figure}

Note that the input $X^{I}$ of site $p^I$ is determined by the public coins, Alice's input $a$ and her private coins. The inputs $X^{-I}$ are determined by the public coins, Bob's input $b$ and his private coins. 

Let $\phi$ denote the distribution of $X$ when $(a,b)$ is chosen according to the distribution $\mu_k$. 

Let $\alpha$ be the approximation ratio parameter. We set $p = \alpha/20 \le 1/20$ in the definition of $\mu_k$.

Given a protocol $\P'$ for \match\ that achieves an $\alpha$-approximation with the error probability at most $1/4$ under $\phi$, we construct a protocol $\P$ for \DISJ\ that has a one-sided error probability of at most $1 - \alpha/10$ as follows.

\paragraph{Protocol $\P$}

\begin{enumerate}
\item Given input $(A, B) \sim \mu_k$, Alice and Bob construct an input $X \sim \phi$ for \match\ as described by the input reduction above. Let $Y = (Y^1,Y^2, \ldots, Y^r)$ be the samples used for the construction of $X$. Let $I, J$ be the two indices sampled by Alice and Bob in the reduction procedure.

\item With Alice simulating site $p~^I$ and Bob simulating other sites and the coordinator, they run $\P'$ on the input defined by $X$. Any communication between site $p^I$ and the coordinator will be exchanged between Alice and Bob. For any communication among other sites and the coordinator, Bob just simulates it without any actual communication. At the end, the coordinator, that is Bob, obtains a matching $M$. 

\item Bob outputs $1$ if, and only if, for some $l \in [k]$, $(u^{J, I}, v^{J,l})$ is an edge in $M$ such that $Y_l^J\equiv B_l = 1$, and $0$, otherwise. 
\end{enumerate}

\paragraph{Correctness} Suppose that \DISJ$(A,B)=0$, i.e., $A_l = 0$ or $B_l = 0$ for all $l \in [k]$. Then, for each $l \in [k]$, we must either have $Y_l^J\equiv B_l = 0$ or $X^{J,I}_l \equiv A_l=0$, but $X^{J,I}_l=0$ means that $(u^{J,I},v^{J,l})$ is not an edge in $M$. Thus, $\P$ will always answer correctly when \DISJ$(A,B)=0$, i.e., it has a one-sided error.

Now suppose that $A_l = B_l = 1$ for some $l \in [k]$. Note that there is at most one such $l$ according to our construction, which we denote by $L$. The output of $\P$ is correct if $(u^{J,I},v^{J,L})$ is an edge in $M$. We next bound the probability of this event.

For each $G^{j}$, for $z\in \{0,1\}$, we let
$$
U_z^j = \{u^{j,i}\in U^j: \hbox{\DISJ}(X^{j, i}, Y^j) = z\},
$$
$$
V_z^j = \{v^{j,i}\in V^j: Y_i^j = z\}
$$
and
$$
U_z = \cup_{j\in [r]} U_z^j \hbox{ and } V_z = \cup_{j\in [r]} V_z^j.
$$

Intuitively, the edges between vertices $U_0 \cup U_1$ and $V_0$ can be regarded as {\em noisy edges} because the total number of such edges is large, but the maximum matching they can form is small (Lemma~\ref{lem:match-property} below). On the other hand, the edges between vertices $U_1$ and $V_1$ can be regarded as {\em important edges} because a maximum matching they can form is large though the total number of such edges is small. Note that there is no edge between vertices $U_0$ and $V_1$. See Figure~\ref{fig:edgetypes} for an illustration. 


\begin{figure}[h]
\begin{center}
\includegraphics[width=0.35\textwidth]{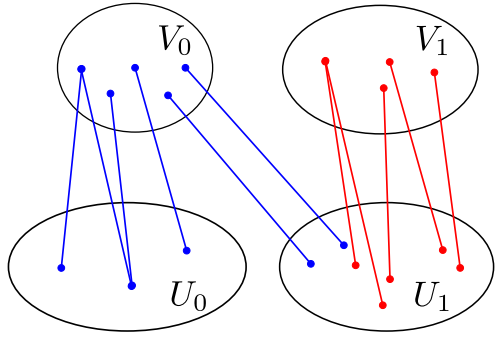}
\end{center}
\caption{Edges between $U_0\cup U_1$ and $V_0$ are noisy edges. Edges between $U_1$ and $V_1$ are important edges. There are no edges between $U_0$ and $V_1$. 
}
\label{fig:edgetypes}
\end{figure}

To find a good matching we must choose many edges from the set of important edges. A key property is that all important edges are statistically identical, that is, each important edge is equally likely to be the edge $(u^{J, I}, v^{J, L})$. Thus, $(u^{J, I}, v^{J, L})$ will be included in the matching returned by $\P'$ with a large enough probability. Using this, we can answer whether $X^{J, I}$ and $Y^{J}$ intersect or not, thus, solving the original \DISJ\ problem.

Recall that we set $p=\alpha/20\le 1/20$ and $\delta=(1-p)^2/2$. Thus, $9/20<\delta<1/2$. In the following, we assume $\alpha\ge c/\sqrt{n}$ for some constant, since otherwise the $\Omega(\alpha^2 k n)$ lower bound will be dominated by the trivial lower bound of $k$.\footnote{Since none of the sites can see messages sent by other sites to the coordinator (unless this is communicated by the coordinator), each site needs to communicate with the coordinator at least once to determine the status of the protocol.}

\begin{lemma}
\label{lem:match-property}
With probability at least $1-1/100$, 
$$
\abs{V_0} \le 2pn.
$$ 
\end{lemma}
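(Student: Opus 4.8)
The plan is to bound $|V_0| = \sum_{j\in[r]} |V_0^j|$ by a concentration argument, where $V_0^j = \{v^{j,i} \in V^j : Y_i^j = 1\}$. Wait — careful: by the definition, $V_0^j = \{v^{j,i} : Y_i^j = 0\}$, but let me re-examine. We have $V_z^j = \{v^{j,i} : Y_i^j = z\}$, so $V_0^j$ counts the coordinates where $Y^j$ is zero. The right vertex $v^{j,l}$ can receive an important edge only when $Y_l^j = 1$; the coordinates with $Y_l^j = 0$ are exactly those that participate only in noisy edges. So $|V_0|$ is the total number of "noisy-side" right vertices, and we want to show it is at most $2pn$ whp. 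Since $Y^j \sim \mu_k(b)$ and under $\mu_k$ each coordinate $b_i$ is Bernoulli — recall from the definition of $\nu$ that $b=0$ with probability $p$ when $w=0$ (prob.\ $p$) and $b\sim\tau_p$ when $w=1$ (prob.\ $1-p$) — one computes $\Pr[b_i = 0] = p\cdot 1 + (1-p)\cdot p = p(2-p) \le 2p$. Resetting $a_d$ in the $\mu_k$ construction does not touch $b$, so this marginal is unchanged. Hence $\E[|V_0|] \le 2p \cdot (n/2) = pn$.

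First I would fix $j$ and observe that, conditioned on the auxiliary variables $W^j$ (the $w_i$'s), the coordinates $b_i$ of $Y^j$ are independent; but more simply, the coordinates $(Y_i^j)_{i\in[k]}$ are mutually independent even unconditionally, since $\nu$ (hence $\mu_k$) generates each pair $(a_i,b_i)$ independently across $i$, and the reset step picks a single uniform $d$ — this introduces a mild dependence. To avoid worrying about that, I would note the reset in $\mu_k$ only changes $a_d$, never any $b_i$, so the vector $Y^j=(b_1,\dots,b_k)$ has fully independent coordinates each equal to $0$ with probability $p(2-p)$. Across the $r$ graphs the $Y^j$ are i.i.d., so $|V_0|$ is a sum of $n/2$ independent indicator random variables, each with mean $\le 2p$. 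Then Hoeffding's inequality (stated in the preliminaries) gives $\Pr[|V_0| - \E|V_0| \ge s] \le e^{-2s^2/(n/2)}$; taking $s = pn$ (so that $\E|V_0| + s \le pn + pn = 2pn$) yields a failure probability of at most $e^{-4p^2 n}$. Since we have assumed $\alpha \ge c/\sqrt{n}$ and $p = \alpha/20$, we get $p^2 n = \alpha^2 n /400 \ge c^2/400$, which is $\Omega(1)$ but not necessarily large enough to beat $1/100$ for a small absolute constant $c$.

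The main obstacle is therefore pinning down this constant: a crude application of Hoeffding with $s=pn$ gives a constant-probability bound but perhaps not below $1/100$. I would handle this in one of two ways. Either (i) sharpen the slack — use $s = pn$ but note $\E|V_0| \le pn$ is loose; in fact a multiplicative Chernoff bound $\Pr[|V_0| \ge (1+\theta)\mu] \le e^{-\theta^2\mu/3}$ with $\mu = \E|V_0|$ and $\theta$ chosen so $(1+\theta)\mu = 2pn$ gives exponent $\Omega(p^2 n) = \Omega(\alpha^2 n)$, and one adjusts the constant $c$ in the standing assumption $\alpha \ge c/\sqrt{n}$ to be large enough that $e^{-\Omega(c^2)} \le 1/100$ — this is legitimate since the $\Omega(\alpha^2 kn)$ lower bound is vacuous (dominated by the trivial $\Omega(k)$ bound, cf.\ the footnote) precisely when $\alpha$ is that small; or (ii) replace Hoeffding by Chebyshev's inequality (also in the preliminaries): $\var(|V_0|) \le \E|V_0| \le pn$, so $\Pr[|V_0| - \E|V_0| \ge pn] \le \var(|V_0|)/(pn)^2 \le 1/(pn)$, which is $\le 1/100$ whenever $pn \ge 100$, i.e.\ $\alpha n \ge 2000$ — again a regime forced by the standing assumption for a suitable $c$. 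I expect route (i) with the multiplicative Chernoff bound to be cleanest and to match the paper's intent, so I would write the proof that way, being explicit that the constant $1/100$ follows once $c$ in $\alpha \ge c/\sqrt n$ is taken appropriately large.
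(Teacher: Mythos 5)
Your proposal is correct and follows essentially the same route as the paper: the paper likewise treats $|V_0|$ as a sum of $n/2$ independent indicators each equal to $1$ with probability $p(2-p)$, so that $\E[|V_0|]=p(2-p)n/2\le pn$, and applies Hoeffding with deviation $s=pn$ to get $\Pr[|V_0|\ge 2pn]\le e^{-2p^2n}\le 1/100$. Your extra care about the final constant is reasonable but not a departure — the paper's last inequality implicitly relies on the same standing assumption $\alpha\ge c/\sqrt{n}$ (with $p=\alpha/20$) that you invoke.
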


\begin{proof} Note that each vertex in $\cup_{j\in [r]} V^j$ is included in $V_0$ independently with probability $p(2-p)$. Hence, $\E[\abs{V_0}] = p(2 - p)n/2$, and by the Hoeffding's inequality, we have
\begin{eqnarray*}
\Pr[\abs{V_0} \ge 2pn] & \le & \Pr[\abs{V_0} - \E[\abs{V_0}] \ge pn]\\
& \le & e^{-2 p^2 n}\\
& \le & 1/100.
\end{eqnarray*}
\end{proof}

Notice that Lemma~\ref{lem:match-property} implies that with probability at least $1-1/100$, the size of a maximum matching formed by edges between vertices $V_0$ and $U_0 \cup U_1$ is smaller than or equal to $2pn$.

\begin{lemma}
\label{lem:match-property2}
With probability at least $1-1/100$, the size of a maximum matching in $G$ is at least $n/5$.
\end{lemma}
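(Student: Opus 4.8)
\textbf{Proof proposal for Lemma~\ref{lem:match-property2}.}
The plan is to exhibit, with high probability, a matching of size at least $n/5$ in $G$ by using only the important edges, i.e.\ edges between $U_1$ and $V_1$. Recall that $U_1^j = \{u^{j,i} : \mathrm{DISJ}(X^{j,i},Y^j) = 1\}$ and $V_1^j = \{v^{j,i} : Y_i^j = 1\}$, and that there are no edges between $U_0$ and $V_1$. The key structural fact I would establish first is that within a single block $G^j$, conditioned on $Y^j = y^j$, each left vertex $u^{j,i}$ with $\mathrm{DISJ}(X^{j,i},y^j)=1$ is connected to \emph{exactly one} vertex of $V_1^j$ (namely the unique coordinate $L$ where $X^{j,i}_L = y^j_L = 1$, which exists and is unique because, under $\mu_k$, the marginal of $X^{j,i}$ given $Y^j=y^j$ is such that at most one coordinate hits a $1$ of $y^j$). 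Hence the important-edge subgraph restricted to $G^j$ is a union of stars, each star centered at a vertex $v^{j,L}\in V_1^j$ with leaves those $u^{j,i}$ that ``point at'' $L$; greedily picking one leaf per nonempty star gives a matching whose size equals the number of occupied centers, i.e.\ the number of distinct right coordinates $L$ with $y^j_L = 1$ that are the target of at least one left vertex. So it suffices to lower bound the expected number of such (block, coordinate) pairs and apply concentration.

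Next I would carry out the counting. Fix $j$ and $y^j$; for each coordinate $l$ with $y^j_l = 1$, the probability (over the $k$ i.i.d.\ draws $X^{j,1},\dots,X^{j,k} \sim \mu_k(a \mid y^j)$, reset at a random coordinate) that \emph{no} left vertex points at $l$ is $(1 - q_l)^k$, where $q_l$ is the per-vertex probability of pointing at $l$. Since each $X^{j,i}$ points at \emph{some} $V_1^j$ coordinate with probability $\Theta(1)$ (this is essentially $\delta = (1-p)^2/2 \ge 9/20$ under $\mu_k$, up to the reset step which changes things by $O(1/k)$), and by symmetry among the roughly $(1-p)k$ ones of $y^j$, we get $q_l = \Theta(1/k)$, so $(1-q_l)^k = \Theta(1)$ is bounded away from $1$, meaning a constant fraction of the ones of $y^j$ are occupied centers in expectation. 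Combined with $|V_1^j| \approx (1-p)k$ ones per block and $r = n/(2k)$ blocks, the expected total matching size from important edges is $\Theta((1-p) n) = \Theta(n)$, and with $p \le 1/20$ the constants can be pushed above $n/5$ — I would track them carefully to land at, say, expected size $\ge n/4$. For concentration, I would argue block by block: the contributions of the $r$ blocks are independent, each is a bounded random variable in $[0,k]$ but better, I would normalize by $k$ or directly bound the matching-size contribution of each block in $[0,k]$ and apply Hoeffding (or McDiarmid over the blocks), using $r = n/(2k)$ and the fact that per-block contribution is $\Theta(k)$ in expectation, to get deviation below $n/20$ with probability at least $1 - 1/100$. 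One should also incorporate the event of Lemma~\ref{lem:match-property} (or rather its complement is irrelevant here since we are only \emph{adding} edges), and possibly a high-probability bound that $|V_1^j|$ is close to its mean in each block, though since we want a \emph{lower} bound on the matching this is automatically favorable.

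The main obstacle I expect is handling the \emph{reset step} in the definition of $\mu_k$ cleanly: when going from $\nu_k$ to $\mu_k$ one picks a uniform coordinate $d \in [k]$ and rerandomizes $X^{j,i}_d$, which both creates and destroys ``pointing'' events and makes the $X^{j,i}$ within a block only conditionally (not unconditionally) simple to analyze. I would deal with this by conditioning on $Y^j = y^j$ and on which vertices had their reset coordinate land on a $1$ of $y^j$ (an event of probability $\Theta(1/k)$ per vertex), showing that these rare events change the count of occupied centers by at most $O(1)$ per block in expectation and hence are absorbed into the constants. A secondary technical point is ensuring $\mathrm{DISJ}(X^{j,i},y^j)=1$ implies a \emph{unique} hit coordinate; this follows from the structure of $\mu_k$ (Bob's string $y^j$ is fixed, Alice's string under $\mu_k$ intersects it in at most one coordinate, as noted in the construction "there is at most one such $l$"), so the star decomposition is exact. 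Once these are pinned down, the rest is a routine expectation computation plus a Hoeffding bound over the $r$ independent blocks.
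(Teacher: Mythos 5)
Your overall strategy is the same as the paper's: restrict attention to the important edges, observe that within each block $G^j$ they form stars centered at vertices of $V_1^j$ (each $u^{j,i}\in U_1^j$ points at the unique coordinate where $X^{j,i}$ and $Y^j$ intersect, and that coordinate is uniform over $V_1^j$), so the matching they yield equals the number of occupied centers, a balls-into-bins occupancy count; then aggregate over the $r$ blocks. The paper's proof is exactly this, lower-bounding the per-block occupancy by a balls-into-bins process with $2k/5$ balls and $4k/5$ bins.

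There are, however, two genuine problems with your execution. First, your concentration step --- Hoeffding (or McDiarmid) over the $r$ independent blocks, each contributing a value in $[0,k]$ --- gives nothing in the regime where $r=n/(2k)$ is small: the lemma must hold for all $k\le n$, so $r$ can be as small as $1$ (when $k=n/2$), and a sum of $O(1)$ many $[0,k]$-valued terms admits no useful Hoeffding bound. You need concentration \emph{within} a block, i.e., for the occupancy count itself. The paper obtains this from a variance bound for the number of non-empty bins (Lemma~1 of \cite{kane10:_optim}) combined with Chebyshev, giving $\Pr[R\ge k/4]\ge 1-O(1/k)$ per block, and then splits into cases: a union bound over blocks when $r=o(k)$, and Hoeffding over the block indicators $Z_j=\mathbf{1}[R^j\ge k/4]$ otherwise. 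Second, your claim that the expectation can be pushed to $\ge n/4$ is not achievable by this construction: a left vertex hits a given $1$-coordinate of $Y^j$ only if its reset coordinate lands there and the reset value is $1$, so $q_l=\frac{1}{2k}$ exactly and the per-center occupancy probability is $1-(1-\frac{1}{2k})^k\approx 1-e^{-1/2}\approx 0.39$; hence the expected number of occupied centers over all blocks is about $0.39\,(1-p)^2\,n/2<n/5$, let alone $n/4$. This looseness is in fact inherited from the paper itself (its per-block bound of $k/4$ sums to $rk/4=n/8$, not the stated $n/5$), and the precise constant is harmless downstream since the reduction only needs the maximum matching to be $\Omega(n)$ with an explicit constant; but you should not assert $n/4$, and reaching the literal $n/5$ would require either also counting a matching among the noisy edges into $V_0$ or weakening the constant in the lemma.
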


\begin{proof} Consider the size of a matching in $G^j$ for an arbitrary $j\in [r]$. For each $i \in [k]$, let $L^i$ be the index $l\in [k]$ such that $X_l^{j, i} = Y_l^j = 1$ if such an $l$ exists (note that by our construction at most one such index exists), and let $L^i$ be defined as NULL, otherwise. 

We use a greedy algorithm to construct a matching between vertices $U^j$ and $V^j$. For $i\in [k]$, we connect $u^{j, i}$ and $v^{j, L^i}$ if $L^i$ is not NULL and $v^{j, L^i}$ is not connected to some $u^{j, i'}$ for $i' < i$. The size of such constructed matching is equal to the number of distinct elements in $\{L^1, L^2, \ldots, L^k\}$, which we denote by $R$. We next establish the following claim: 
\begin{equation} \label{cla:binball} 
\Pr[R \ge k/4] \ge 1 - O(1/k). 
\end{equation}

By our construction, we have
$$
\E[\abs{U_1^j}]= \delta k \hbox{ and } \E[\abs{V_1^j}] = (1-p)^2k.
$$
By the Hoeffding's inequality, with probability $1 - e^{-\Omega(k)}$, 
$$
\abs{V_1^j} \ge \frac{9}{10}\E[\abs{V_1^j}] 
\ge \frac{4}{5}k
$$
and 
$$
\abs{U_1^j} \ge \frac{9}{10}\E[\abs{U_1^j}] \ge \frac{2}{5}k. 
$$

It follows that with probability $1 - e^{-\Omega(k)}$, it holds that $R$ is at least of value $R'$, where $R'$ is as defined as follows. 

Consider a balls-into-bins process with $s$ balls and $t$ bins. Throw each ball to a bin sampled uniformly at random from the set of all bins. Let $Z$ be the number of non-empty bins at the end of this process. Then, it is straightforward to observe that the expected number of non-empty bins is
$$\E[Z] = t \left(1-\left(1-\frac{1}{t}\right)^s\right) \ge t \left(1-e^{-s/t}\right) .$$
By Lemma~1 in \cite{kane10:_optim}, for $100 \le s \le t/2$, the variance of the number of non-empty bins satisfies\footnote{The constants used here are slightly different from~\cite{kane10:_optim}.}
$$
\var[Z] \le 5\frac{s^2}{t}
$$

Let $R'$ be the number of non-empty bins in the balls-into-bins process with $s = 2k/5$ balls and $t = 4k/5$ bins. Then, we have
$$
\E[R'] \ge \frac{4}{5}k \left(1-1/\sqrt{e}\right)
$$
and
$$
\var[R'] \le 5\frac{(2k/5)^2}{4k/5} = k. 
$$

By the Chebyshev's inequality,
$$
\Pr[R' < \E[R'] - k/20] \le \frac{\var[R']}{(k/20)^2} < \frac{320}{k}.
$$
Hence, with probability $1 - O(1/k)$, $R \ge R' \ge k/4$, which proves the claim in (\ref{cla:binball}).

It follows that for each $G^j$, we can find a matching in $G^j$ of size at least $k/4$ with probability $1 - O(1/k)$. If $r = n/(2k) = o(k)$, then by the union bound, it holds that with probability at least $1-1/100$, the size of a maximum matching in $G$ is at least $n/4$. Otherwise, let $R^1, R^2, \ldots, R^r$ be the sizes of matchings that are independently computed using the greedy matching algorithm described above for respective input graphs $G^1,G^2, \ldots, G^r$. Let $Z_j = 1$ if $R^j \ge k/4$, and $Z_j = 0$, otherwise. Since $R_j \ge k Z_j/4$ for all $j\in [r]$ and $\E[Z_j] = 1 - O(1/k)$, by the Hoeffding's inequality, we have
$$
\Pr\left[\sum_{j=1}^r R^j < \frac{n}{5}\right] \le \Pr\left[\sum_{j=1}^r Z_j < \frac{4n}{5k}\right] \le e^{-\Omega(r)}
$$
Hence, the size of a maximum matching in $G$ is at least $n/5$ with probability at least $1-e^{-\Omega(r)} \ge 1-1/100$.
\end{proof}

If $\P'$ is an $\alpha$-approximation algorithm with error probability at most $1/4$, then by Lemma~\ref{lem:match-property}, with probability at least $3/4 - 1/100 \ge 2/3$, $\P'$ will output a matching $M$ that contains at least $\alpha n/5 - 2pn$ important edges, and we denote this event by $\mathcal{F}$. We know that there are at most $n/2$ important edges and edge $(u^{J, I}, v^{J, L})$ is one of them. We say that $(i,j,l)$ is important for $G$, if $(u^{j,i}, v^{j,l})$ is an important edge in $G$. Given an input $G$, the algorithm cannot distinguish between any two important edges. We can apply the principle of deferred decisions to decide the value of $(I,J)$ after the matching has already been computed, which means, conditioned on $\mathcal{F}$, the probability that $(u^{J, I}, v^{J, L}) \in M$ is at least $(\alpha n/5 - 2pn)/(n/2) = \alpha/5$, where $p = \alpha/20$. Since $\mathcal{F}$ happens with probability at least $2/3$, we have 
$$
\Pr[(u^{J, I}, v^{J, L}) \in M]\ge \alpha/10.
$$

To sum up, we have shown that protocol $\P$ solves \DISJ\ correctly with one-sided error of at most $1 - \alpha/10$.

\paragraph{Information cost} We analyze the information cost of \match. Let $\Pi = \Pi^1 \circ \Pi^2 \circ \cdots \circ \Pi^k$ be the best protocol for \match\ with respect to input distribution $\phi$ and the one-sided error probability $1 - \alpha/10$. 

Let $W^{-J} = (W^1, \ldots,W_{J-1},W_{J+1},\ldots, W^r)$, and $W = (W^1,W^2,\ldots,W^r)$. Let $W_{A,B} \sim \tau_p^k$ denote the random variable used to sample $(A,B)$ from $\mu_k$. Recall that in our input reduction $I, J, W^{-J}$ are public coins used by Alice and Bob. 

We have the following:
\begin{eqnarray}
 \frac{2}{n} IC_{\phi, \delta}(\hbox{\match}) 
& \ge & \frac{1}{rk} \sum_{i=1}^k I(X^i, Y; \Pi^i) \quad  \label{ffineq} \\
& \ge &   \frac{1}{rk} \sum_{i = 1}^k I(X^i, Y; \Pi^i | W)\label{fineq} \\
& \ge &  \frac{1}{rk} \sum_{i = 1}^k \sum_{j=1}^{r} I(X^{j, i}, Y^j; \Pi^i | W^{-j}, W^{j})\label{eq:b-0} \\
& = &  \frac{1}{rk} \sum_{i = 1}^k \sum_{j=1}^{r} I(A, B; \Pi^i | I = i, J = j, W^{-j}, W_{A,B})\label{eq:b-1} \\
& = &  I(A, B; \Pi^I | I , J , W^{-J}, W_{A,B} ) \nonumber \\
& \ge &  I(A, B; \Pi^* | W_{A,B}, R)  \label{eq:b-2}   \\
& = & \Omega(\alpha^2 k), \label{eq:b-3}
\end{eqnarray}
where (\ref{ffineq}) is by Lemma~\ref{lem:IC-in-MP}, (\ref{fineq}) is by data processing inequality, (\ref{eq:b-0}) is by the super-additivity property, (\ref{eq:b-1}) holds because the distribution of $W^j$ is the same as that of $W_{A,B}$, and the conditional distribution of $(X^{j, i}, Y^j, \Pi^i)$ given $W^{-j}, W^{j}$ is the same as the conditional distribution of $(A,B, \Pi^i)$ given $I = i$, $J = j$, $W^{-j}$, $W_{A,B}$, in (\ref{eq:b-2}), $\Pi^*$ is the best protocol for \DISJ\ with one-sided error probability at most $1 - \alpha/10$ and $R$ is the public randomness used in $\Pi^*$, and (\ref{eq:b-3}) holds by Theorem~\ref{thm:DISJ} where recall that we have set $p = \alpha/20$.

We have thus shown that $IC_{\phi, 1/4}(\hbox{\match}) \ge \Omega(\alpha^2 kn)$. Since by Theorem~\ref{thm:R-IC}, $R_{1/4}(\hbox{\match}) \ge IC_{\phi, 1/4}(\hbox{\match})$, it follows that
$$
R_{1/4}(\hbox{\match}) \ge \Omega(\alpha^2 kn)
$$
which proves Theorem~\ref{thm:main}.

\section{Upper Bound}\label{sec:up}

In this section we present an $\alpha$-approximation algorithm for distributed matching problem with an upper bound on the communication complexity that matches the lower bound for any $\alpha\le 1/2$ up to poly-logarithmic factors. 

We have described a simple algorithm that guarantees an $1/2$-approximation for \match\ at the communication cost of $O(kn \log n)$ bits in Section~\ref{sec:intro}. This algorithm is a greedy algorithm that computes a maximal matching. The communication cost of the algorithm is $O(\alpha^2 kn\log n)$ bits. If $1/8< \alpha\le 1/2$, we simply apply the greedy $1/2$-approximation algorithm that has the communication cost of $O(kn\log n)$ bits. Therefore, we assume that $ \alpha \le 1/8$ in the rest of this section. We next present an $\alpha$-approximation algorithm that uses the greedy maximal matching algorithm as a subroutine. 

\paragraph{} {\bf Algorithm:} The algorithm consists of two steps: 

\begin{enumerate}
\item The coordinator sends a message to each site asking to compute a local maximum matching, and each site then follows up with reporting back to the coordinator the \emph{size} of its local maximum matching. The coordinator sends a message to a site that holds a local maximum matching of maximum size, and this site then responds with sending back to the coordinator at most $\alpha n$ edges from its local maximum matching. Then, the algorithm proceeds to the second step.  
 
\item The coordinator selects each site independently with probability $q$, where $q$ is set to $8\alpha$ (recall we assume $\alpha\le 1/8$), and computes a maximal matching by applying the greedy maximal matching algorithm to the selected sites.
\end{enumerate}

It is readily observed that the expected communication cost of Step~1 is at most $O((k+\alpha n) \log n)$ bits, and that the communication cost of Step~2 is at most $O((k+\alpha^2 kn) \log n)$ bits. We next show correctness of the algorithm.

\paragraph{Correctness of the algorithm.} Let $X_i$ be a random variable that indicates whether or not site $p^i$ is selected in Step~2. Note that $\E[X_i] = q$ and $\var[X_i] = q(1 - q)$. Let $M$ be a maximum matching in $G$ and let $m$ denote its size. Let $m_i$ be the number of edges in $M$ which belong to site $p^i$. Hence, we have $\sum_{i=1}^k m_i = m$ because the edges of $G$ are assumed to be partitioned disjointly over the $k$ sites. We can assume that $m_i \le \alpha m$ for all $i \in [k]$; otherwise, the coordinator has already gotten an $\alpha$-approximation from Step~1.

Let $Y$ be the size of the maximal matching that is output of Step~2. Recall that any maximal matching is at least $1/2$ of any maximum matching. Thus, we have $Y \ge X/2$, where $X = \sum_{i=1}^k m_i X_i$. Note that we have $\E[X] = q m$ and $\var[X] = q (1 - q)\sum_{i=1}^k m_i^2$. Under the constraint $m_i \le \alpha m$ for all $i\in [k]$, we have 
$$\sum_{i=1}^k m_i^2 \le \alpha m\sum_{i=1}^km_i =\alpha m^2.$$
Hence, combining with the assumption $q = 8\alpha$, it follows that $\var[X] \le 8\alpha^2m^2$. By Chebyshev's inequality, we have
$$
\Pr[|X - q m|\ge 6\alpha m] \le \frac{8}{36} < \frac{1}{4}.
$$
Since $q=8\alpha$, it follows that $X \ge 2\alpha m$ with probability at least $3/4$. Combining with $Y\ge X/2$, we have that $Y \ge \alpha m$ with probability at least $3/4$.

We have shown the following theorem. 

\begin{theorem} For every $\alpha\le 1/2$, there exists a randomized algorithm that computes an $\alpha$-approximation of a maximum matching with probability at least $3/4$ at the communication cost of $O((\alpha^2 kn + \alpha n + k)\log n)$ bits.
\end{theorem}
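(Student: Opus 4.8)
The plan is to take the two-step algorithm just described, together with the case distinction on $\alpha$, and verify its correctness and communication cost. First, when $1/8 < \alpha \le 1/2$ I would not use it at all: the greedy maximal-matching protocol of Section~\ref{sec:intro} is deterministic, returns a maximal matching — hence a $1/2$-approximation, which is in particular an $\alpha$-approximation — and costs $O(kn\log n)$ bits, which is $O(\alpha^2 kn\log n)$ because $\alpha = \Theta(1)$ in this range. So all the work is in the regime $\alpha \le 1/8$, analyzed below.

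Fix a maximum matching $M$ of $G$, let $m = \abs{M}$ (note $m \le n$), and let $m_i$ be the number of edges of $M$ held by site $p^i$, so that $\sum_{i=1}^k m_i = m$. The $M$-edges at $p^i$ form a matching inside $G(E^i)$, so the local maximum matching at $p^i$ has size $\ge m_i$; hence if $m_i \ge \alpha m$ for some $i$, then the largest local maximum matching — which is what Step~1 forwards — has size $\ge \alpha m$, and the chosen site returns at least $\alpha m$ of its edges (using $\alpha m \le \alpha n$), already an $\alpha$-approximation. Otherwise $m_i < \alpha m$ for all $i$ and we move to Step~2. Let $X_i$ indicate that $p^i$ is selected and put $X = \sum_i m_i X_i$, the number of $M$-edges in the selected subgraph. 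Then $\E[X] = qm = 8\alpha m$ and $\var[X] = q(1-q)\sum_i m_i^2 \le q(\max_i m_i)\sum_i m_i < q\alpha m^2 = 8\alpha^2 m^2$. Chebyshev's inequality gives $\Pr[X < 2\alpha m] \le \Pr[\abs{X - 8\alpha m} > 6\alpha m] < 8/36 < 1/4$, so with probability at least $3/4$ the selected subgraph contains a matching (namely $M$ restricted to it) of size $\ge 2\alpha m$; consequently the maximal matching computed in Step~2, being at least half of a maximum matching of that subgraph, has size $\ge \alpha m$.

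For the communication cost I would count $\log n$-bit edge identifiers directly. Step~1 costs $O(k)$ bits for the coordinator's queries, $O(k\log n)$ bits for the $k$ reported sizes, and $O(\alpha n\log n)$ bits for the $\le \alpha n$ edges returned by the chosen site — $O((k+\alpha n)\log n)$ in all. In Step~2, announcing selections costs $O(k)$ bits, and the expected number of selected sites is $qk = 8\alpha k$. Running the sequential greedy maximal matching over the selected sites, the coordinator shuttles the current matching in and out of each selected site; if we stop the chain the moment $\lceil\alpha n\rceil$ edges have accumulated — at which point we already hold an $\alpha$-approximation, and if this never happens the argument of the previous paragraph applies verbatim to the final (genuinely maximal) matching — then every message in this chain carries $O(\alpha n)$ edges, i.e. $O(\alpha n\log n)$ bits. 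Hence Step~2 costs $O(k\log n) + O(\alpha k)\cdot O(\alpha n\log n) = O((k+\alpha^2 kn)\log n)$ in expectation, and summing the two steps yields $O((\alpha^2 kn + \alpha n + k)\log n)$ bits.

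The one point that needs care is the variance bound: the crude estimate $\var[X] \le q\sum_i m_i^2 \le qm^2$ is worthless, since it exceeds $\E[X]^2 = q^2 m^2$; it is exactly the upper bound $m_i < \alpha m$ — which we are entitled to assume precisely because the alternative is that Step~1 has already succeeded — that replaces one factor of $m$ by $\alpha m$ and makes the concentration bite. Everything else (the case split on $\alpha$, the fact that a maximal matching is a $\tfrac12$-approximation, and the edge-counting for the cost) is routine, and one obtains the stated communication bound with the required success probability at least $3/4$.
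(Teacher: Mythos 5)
Your proof is correct and follows the paper's algorithm and analysis essentially verbatim: the same case split at $\alpha=1/8$, the same reduction to the case $m_i\le\alpha m$ via Step~1, the same variance bound $\var[X]\le 8\alpha^2m^2$, and the same Chebyshev step. The one place you go beyond the paper is in justifying the $O((k+\alpha^2kn)\log n)$ cost of Step~2, which the paper dismisses as ``readily observed'': your device of halting the greedy chain once $\lceil\alpha n\rceil$ edges have accumulated (or some equivalent cap on message size) is actually needed, since otherwise each of the $\Theta(\alpha k)$ messages could carry a matching of $\Theta(n)$ edges and the naive accounting would only give $O(\alpha kn\log n)$.
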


Note that $\Omega(\alpha n)$ is a trivial lower bound, simply because the size of the output could be as large as $\Omega(\alpha n)$. Obviously, $\Omega(k)$ is a lower bound, because the coordinator has to send at least one message to each site. Thus, together with the lower bound $\Omega(\alpha^2 kn)$ in Theorem~\ref{thm:main}, the upper bound above is tight up to a $\log n$ factor.

One can see that the above algorithm needs $O(\alpha k)$ rounds, as we use a naive algorithm to compute a maximal matching among $\alpha k$ sites. If $k$ is large, say, $n^\beta$ for some constant $\beta\in(0,1)$, this may not be acceptable. Fortunately, Luby's parallel algorithm \cite{luby1986simple} can be easily adapted to our model, using only $O(\log n)$ rounds at the cost of increasing the communication by at most a $\log n$ factor. The details are provided in Appendix~\ref{app:luby's}. 

\section{Conclusion}\label{sec:conc}
We have established a tight lower bound on the communication complexity for approximate maximum matching problem in the message-passing model. 

An interesting open problem is the complexity of the counting version of the problem, i.e., the communication complexity if we only want to compute an approximation of the \emph{size} of a maximum matching in a graph. Note that our proof of the lower bound relies on the fact that the algorithm has to return a certificate of the matching. Hence, in order to prove a lower bound for the counting version of the problem, one may need to use new ideas and it is also possible that a better upper bound exists. In a recent work \cite{KKS14}, the counting version of the matching problem was studied in the random-order streaming model. They proposed an algorithm that uses one pass and poly-logarithmic space, which computes a poly-logarithmic approximation of the size of a maximum matching in the input graph. 

A general interesting direction for future research is to investigate the communication complexity for other combinatorial problems on graphs, for example, connected components, minimum spanning tree, vertex cover and dominating set. The techniques used for approximate maximum matching in the present paper could be of use in addressing these other problems.

\appendix
\section{Luby's algorithm in the coordinator model}\label{app:luby's}
{\bf Luby's algorithm~\cite{luby1986simple}:} Let $G=(V,E)$ be the input graph, and $M$ be a matching initialized to $\emptyset$. Luby's algorithm for maximal matching is as follows.
\begin{enumerate}
\item If $E$ is empty, return $M$.\label{algostep:luby}
\item Randomly assign unique priority $\pi_e$ to each $e\in E$.
\item Let $M'$ be the set of edges in $E$ with higher priority than all of its neighboring edges. Delete $M'$ and all the neighboring edges of $M'$ from $E$, and add $M'$ to $M$. Go to step~\ref{algostep:luby}.
\end{enumerate}
It is easy to verify that the output $M$ is a maximal matching. The number of iterations before $E$ becomes empty is at most $O(\log n)$ in expectation~\cite{luby1986simple}. Next we briefly describe how to implement this algorithm in the coordinator model. Let $E^i$ be the edges held by site $p^i$.
\begin{enumerate}
\item For each $i$, if $E^i$ is empty, $p^i$ halts. Otherwise $p^i$ randomly assigns unique priority $\pi_e$ to each $e\in E^i$.\label{algostep:lubyMP}
\item Let $M'^i$ be the set of edges in $E^i$ with higher priority than all of its neighboring edges in $E^i$. Then $p^i$ sends $M'^i$ together with their priorities to the coordinator.
\item Coordinator gets $W=M'^1\cup M^2 \cup \cdots\cup M'^k$. Let $M'$ be the set of edges in $W$ with higher priority than all of its neighboring edges in $W$. Coordinator adds $M'$ to $M$ and then sends $M'$ to all sites.
\item Each site $p^{i}$ deletes all neighboring edges of $M'$ from $E^{i}$, and goes to step~\ref{algostep:lubyMP}.
\item After all the sites halt, the coordinator outputs $M$.
\end{enumerate}

It is easy to see that the above algorithm simulates the algorithm of Luby. Therefore, the correctness follows from the correctness of Luby's algorithm, and the number of rounds is the same, which is $O(\log n)$. The communication cost in each round is at most $O(kn\log n)$ bits because, in each round, each site sends a matching to the coordinator, and the coordinator sends back another matching. Hence, the total communication cost is $O(kn\log^2n)$ bits.



\end{document}